\tikzset{ modal/.style={>=stealth,shorten >=1pt,shorten <=1pt,auto,node distance=0.7cm, semithick}, world/.style={circle,draw,minimum size=0.2cm}, point/.style={circle,draw,inner sep=0.5mm,fill=black},point1/.style={circle,draw,inner sep=0.5mm}, reflexive above/.style={->,loop,looseness=7,in=120,out=60}, reflexive below/.style={->,loop,looseness=7,in=240,out=300}, reflexive left/.style={->,loop,looseness=7,in=150,out=210}, reflexive right/.style={->,loop,looseness=7,in=30,out=330}, itria/.style={draw,isosceles triangle,shape border rotate=270,yshift=-1.45cm, minimum height = 15mm}, empty/.style={inner sep=0.0mm } }
\renewcommand{\emptyset}{\varnothing}
\renewcommand{\phi}{\varphi}
\renewcommand{\epsilon}{\varepsilon}
\newcommand{\leqpos}{\leq_\ell}
\renewcommand{\L}{{\sf L}}
\newcommand{\K}{{\sf K}}
\newcommand{\D}{{\sf D}}
\newcommand{\T}{{\sf T}}
\newcommand{\Kfour}{{\sf K4}}
\newcommand{\Kfive}{{\sf K5}}
\newcommand{\KDfive}{{\sf KD5}}
\newcommand{\Kfourfive}{{\sf K45}}
\newcommand{\KDfourfive}{{\sf KD45}}
\newcommand{\KBfive}{{\sf KB5}}
\newcommand{\Sfour}{{\sf S4}}
\newcommand{\Sfive}{{\sf S5}}
\newcommand{\layKfive}{{\sf L.K5}}
\newcommand{\layKDfive}{{\sf L.KD5}}
\newcommand{\layKfourfive}{{\sf L.K45}}
\newcommand{\layKDfourfive}{{\sf L.KD45}}
\newcommand{\layKBfive}{{\sf L.KB5}}
\newcommand{\laySfive}{{\sf L.S5}}
\newcommand{\layL}{{\sf L.L}}
\newcommand{\logic}{{\sf L}}
\newcommand{\vlfill}[1]{{\{#1\}}}
\newcommand{\idTnd}{\mathsf{id_P}}
\newcommand{\idTop}{\mathsf{id_{\top}}}
\newcommand{\ruled}{{\sf d}}
\newcommand{\ruledtrunk}{{\sf d}_t'}
\newcommand{\doubledrule}{{\sf dd}}
\newcommand{\rulet}{{\sf t}}
\newcommand{\ce}{\colonequals}
\newcommand{\cce}{\coloncolonequals}
\newcommand{\calM}{\mathcal{M}}
\newcommand{\calN}{\mathcal{N}}
\newcommand{\calI}{\mathcal{I}}
\newcommand{\calJ}{\mathcal{J}}
\newcommand{\calG}{\mathcal{G}}
\newcommand{\calGcrown}{\mathcal{G}_c}
\newcommand{\calH}{\mathcal{H}}
\newcommand{\spdisj}{\mathbin{\varovee}}
\newcommand{\spconj}{\mathbin{\varowedge}}
\newcommand{\bigspdisj}{\mathop{\mathlarger{\mathlarger{\mathlarger{\mathlarger{\varovee}}}}}\limits}
\newcommand{\bigspconj}{\mathop{\mathlarger{\mathlarger{\mathlarger{\mathlarger{\varowedge}}}}}\limits}
\newcommand{\Lab}{\mathit{Lab}}
\newcommand{\Lit}{{\sf Lit}}
\newcommand{\LitDis}{{\sf LitDis}}
\newcommand{\Prop}{{\sf Pr}}
\newcommand{\mfal}[3]{{#1}\mathord{\downarrow}_{#3}{#2}}
\begin{document}

\title{Extensions of $\Kfive$: Proof Theory and\\ Uniform Lyndon Interpolation}
\titlerunning{Extensions of $\Kfive$: Proof Theory and Uniform Lyndon Interpolation}
% If the paper title is too long for the running head, you can set
% an abbreviated paper title here
%
\author{Iris van der Giessen\inst{1}\thanks{Supported by a UKRI
Future Leaders Fellowship, ‘Structure vs Invariants in Proofs’, project reference MR/S035540/1.} \and
Raheleh Jalali\inst{2}\thanks{Acknowledges the support of the Netherlands Organization for Scientific Research under grant 639.073.807 and the Czech Science Foundation Grant No. 22-06414L.}\orcidID{0000-0002-3321-8087} \and
Roman Kuznets\inst{3}\thanks{Supported by the Austrian Science Fund (FWF) ByzDEL project (P33600).}\orcidID{0000-0001-5894-8724}}
\authorrunning{I. van der Giessen et al.}
% First names are abbreviated in the running head.
% If there are more than two authors, 'et al.' is used.
%
\institute{University of Birmingham,  UK, \email{i.vandergiessen@bham.ac.uk} \and
Utrecht University,  Netherlands and Czech Academy of Sciences, Prague, Czechia, \email{rahele.jalali@gmail.com} \and
TU Wien,  Austria, \email{roman@logic.at}
}
\maketitle              % typeset the header of the contribution
\begin{abstract}
We introduce a Gentzen-style framework, called \emph{layered sequent calculi}, for modal logic~$\Kfive$ and its extensions~$\KDfive$, $\Kfourfive$, $\KDfourfive$, $\KBfive$,~and~$\Sfive$ with the goal to investigate the uniform Lyndon interpolation property (ULIP), which implies both the uniform interpolation property and the Lyndon interpolation property. We obtain complexity-optimal decision procedures for all logics and 
present a constructive proof  of  the ULIP for $\Kfive$, which to the best of our knowledge, is the first such syntactic proof.
To prove that the interpolant is correct, we use model-theoretic methods, especially bisimulation modulo literals.
\end{abstract}

\section{Introduction}
The uniform interpolation property~(UIP) is an important property of a logic. It strengthens  the Craig interpolation property~(CIP) by making  interpolants  depend on only one formula of an implication, either the premise or conclusion. A lot of work has gone into proving the UIP, and it is shown to be useful in various areas of computer science, including knowledge representation~\cite{Koopmann15PhD} and description logics~\cite{lutz}. 
Early results on the UIP in modal logic include positive results proved  semantically for  logics~$\mathsf{GL}$~and~$\mathsf{K}$ (independently in \cite{Shavrukov93,Ghilardi95,Visser96}) and negative results for logics~$\Sfour$~\cite{GhilardiZawadowski95} and $\Kfour$~\cite{Bilkova06PhD}.
A proof-theoretic method to prove the UIP was first proposed in~\cite{Pitts92JSL} for intuitionistic propositional logic and later adapted to modal logics, such as~$\mathsf{K}$~and~$\mathsf{T}$ in~\cite{Bilkova06PhD}. 
A general proof-theoretic method of proving the UIP  for many classical and intuitionistic (non-)normal modal logics and substructural (modal) logics based on the form of their sequent-calculi rules was developed in the series of papers~\cite{Iemhoff19b,Semianalytic2,tabatabai2022}.\looseness=-1

Apart from the UIP, we are also interested in the uniform Lyndon interpolation property~(ULIP) that is a strengthening of the UIP in the sense that interpolants must respect the polarities of the propositional variables involved. Kurahashi~\cite{Kur20} first introduced this property and proved it for  several normal modal logics, by employing a semantic method using layered bisimulations. A sequent-based proof-theoretic method  was used in~\cite{tab2022uniform} to show the ULIP for several non-normal modal logics and conditional logics. 

Our long-term goal is to provide a general proof-theoretic method to (re)prove the UIP for modal logics via multisequent calculi (i.e.,~nested sequents, hypersequents, labelled hypersequents, etc.). Unlike many other ways of proving interpolation, the proof-theoretic treatment is constructive in that it additionally yields an algorithm for constructing uniform interpolants. Towards this goal, we build on the modular treatment of multicomponent calculi to prove the CIP for modal and intermediate logics in~\cite{FitKuz15APAL,Kuz16Proof,Kuz18APAL,KuzLell18,KuzLel21}. First steps have been made by reproving the UIP for modal logics~$\K$,~$\D$,~and~$\T$ via nested sequents~\cite{van2021} and for~$\Sfive$ via  hypersequents~\cite{GieJalKuz21arx,vdGiessen22PhD}, the first time this is proved proof-theoretically   for~$\Sfive$. \looseness=-1

In this paper, we focus on logics~$\Kfive$, $\KDfive$, $\Kfourfive$, $\KDfourfive$, $\KBfive$,~and~$\Sfive$. The ULIP for these logics was derived in~\cite[Prop.~3]{Kur20} from the logics' local tabularity~\cite{NaglyThomason85} and  Lyndon interpolation property~(LIP)~\cite{Kuz16}.

Towards a modular proof-theoretic treatment, we introduce a new form of multisequent calculi for these logics that we call \emph{layered sequent calculi}, the structure of which is inspired by the structure of the Kripke frames for the concerned logics from~\cite{nag}. For $\Sfive$, this results in  standard hypersequents~\cite{Min71PSIM,Pot83JSL,Avr96}. For~$\Kfive$ and $\KDfive$, the presented calculi are similar to grafted hypersequent calculi in~\cite{kuznetsLellmann16} but without explicit weakening. Other, less related, proof systems include analytic cut-free sequent systems for $\Kfive$ and $\KDfive$ \cite{Takano01}, cut-free sequent calculi for $\Kfourfive$ and $\KDfourfive$ \cite{Svartz}, and nested sequent calculi for modal logics \cite{Brue09AML}. 

The layered sequent calculi introduced in this paper adopt a strong version of termination that only relies on a local loop-check based on saturation. For all concerned logics, this yields a decision procedure that runs in co-NP time, which is, therefore, optimal~\cite{HalReg}. We provide a semantic completeness proof via a countermodel construction from failed proof search. \looseness=-1

Finally,  layered sequents are used  to provide the first proof-theoretic proof of the ULIP for $\Kfive$. The method is adapted from~\cite{vdGiessen22PhD,GieJalKuz21arx} in which the UIP is proved for $\Sfive$ based on hypersequents. We provide an algorithm to construct uniform Lyndon interpolants purely by syntactic means using the termination strategy of the proof search. To show the correctness of the constructed interpolants, we use model-theoretic techniques inspired by bisimulation quantification in the setting of uniform  Lyndon interpolation~\cite{Kur20}. 

\section{Preliminaries}\label{sec:preliminaries}
The language of modal logics consists of a set~$\mathsf{Pr}$ of countably many (\emph{propositional}) \emph{atoms}~$p, q, \ldots$, their \emph{negations}~$\overline{p}, \overline{q}, \ldots$, \emph{propositional connectives}~$\wedge$~and~$\vee$, \emph{boolean constants}~$\top$~and~$\bot$, and \emph{modal operators}~$\Box$~and~$\Diamond$. A \emph{literal}~$\ell$ is either an atom or its negation, and the set of all literals is denoted by~$\Lit$. We define \emph{modal formulas} in the usual way and denote them by lowercase Greek letters $\phi, \psi, \ldots$. We define~$\overline{\phi}$ using the usual De~Morgan laws to push the negation inwards (in particular, $\overline{\overline{p}} \colonequals p$)  and $\phi \to \psi \ce \overline{\phi} \vee \psi$. We use uppercase Greek letters $\Gamma, \Delta, \ldots$ to refer to finite \emph{multisets} of formulas. We write $\Gamma,\Delta$ to mean $\Gamma \cup \Delta$ and $\Gamma, \phi$ to mean $\Gamma \cup \{ \phi \}$. The set of literals of a formula $\phi$, denoted $\Lit (\phi)$, is defined recursively: $\Lit(\top)=\Lit(\bot)=\varnothing$, $\Lit (\ell)=\ell$ for $\ell \in \Lit$, $\Lit (\phi \wedge \psi)= \Lit (\phi \vee \psi)= \Lit (\phi) \cup \Lit (\psi)$, and $\Lit (\Box \phi)= \Lit (\Diamond \phi)=\Lit (\phi)$.

We consider extensions of~$\Kfive$ with any combination of axioms~$\mathsf{4}$, $\mathsf{d}$, $\mathsf{b}$,~and~$\mathsf{t}$ (Table~\ref{table:axiom}). Several of the 16~combinations coincide, resulting in 6~logics: $\Kfive$, $\KDfive$, $\Kfourfive$, $\KDfourfive$, $\KBfive$,~and~$\Sfive$ (Table \ref{table:logics}). Throughout the paper, we assume $\logic \in \{\Kfive, \KDfive, \Kfourfive, \KDfourfive, \KBfive, \Sfive \}$ and write $\vdash_\logic \phi$ if{f} $\phi \in \logic$.

\begin{definition}[Logic $\Kfive$]\label{Dfn:K5}
\emph{Modal logic~$\Kfive$} is axiomatized by the classical tautologies, axioms\/~$\sf k$~and\/~$\sf 5$, and rules modus ponens $($\/from $\phi$ and $\phi \to \psi$ infer  $\psi)$ and necessitation $($\/from $\phi$ infer\/ $\Box \phi)$.
\end{definition}

Throughout the paper we employ the semantics of Kripke frames and models. 

\begin{definition}[Kripke semantics] \label{def:Kripke}
A \emph{Kripke frame} is a pair\/~$(W,R)$ where $W$~is a nonempty set of \emph{worlds} and $R \subseteq W \times W$ a binary relation.
A \emph{Kripke model} is a triple\/~$(W, R, V)$ where\/ $(W,R)$ is a Kripke frame and $V\colon \mathsf{Pr} \to \mathcal{P}(W)$ is a \emph{valuation function}. A formula~$\phi$ is defined to be \emph{true} at a world~$w$ in a model~$\mathcal{M}=(W,R,V)$, denoted $\mathcal{M}, w \vDash \phi$, as follows: $\calM,w\vDash\top$, $\calM,w\nvDash\bot$ and
\begin{center}
\begin{tabular}{l @{\extracolsep{1em}} c l}
$\calM,w\vDash p$ & if{f} & $w\in V(p)$\\
$\calM,w\vDash \overline{p}$ & if{f} & $w\notin V(p)$\\
$\calM,w\vDash\varphi\land\psi$ & if{f} & $\calM,w\vDash\varphi$ and $\calM,w\vDash\psi$\\
$\calM,w\vDash\varphi\lor\psi$ & if{f} & $\calM,w\vDash\varphi$ or $\calM,w\vDash\psi$\\
$\calM,w\vDash\Box\varphi$ & if{f} & for all $v \in W$ such that $wRv$, $\mathcal M,v\vDash\varphi$\\
$\calM,w\vDash\Diamond\varphi$ & if{f} & there exists $v \in W$ such that $wRv$ and $\calM,v\vDash\varphi$.\\
\end{tabular}
\end{center}
Formula~$\phi$ is \emph{valid in~$\calM = (W,R,V)$}, denoted\/ $\calM \vDash \phi$, if{f} for all $w \in W$, $\calM,w \vDash \phi$.  We call\/ $\varnothing \ne C \subseteq W$ a \emph{cluster} $($\/in~$\mathcal{M})$ if{f} $C \times C \subseteq R$, i.e.,~the relation~$R$ is \emph{total} on~$C$. We write $w R C$ if{f} $w R v$ for all $v \in C$.
\end{definition}

\begin{table}[t]
\begin{center}
\setlength{\tabcolsep}{10pt}
\footnotesize{\begin{tabular}{ |c| c| c| }
\hline
Axiom &  Formula & Frame condition\\
\hline
$\sf k$ & $\Box(\phi \to \psi) \to (\Box \phi \to \Box \psi)$ & none\\
\hline
$\sf 5$ & $\Diamond \phi \to \Box \Diamond \phi$ & Euclidean: $w R v \wedge w R u \Rightarrow v R u$\\
\hhline{|=|=|=|}
$\sf 4$ & $\Box \phi \to \Box \Box \phi$ & transitive: $w R v \wedge v R u \Rightarrow w R u$\\
\hline
$\sf d$ & $\Box \phi \to \Diamond \phi$ & serial: $\forall w \exists v (w R v)$\\
\hline
$\sf b$ & $\phi \to \Box \Diamond \phi$  & symmetric:  $w R v \Rightarrow v R w$\\
\hline
$\sf t$ & $\Box \phi \to \phi$ & reflexive: $\forall w (w R w)$\\
\hline
\end{tabular}}\vspace{0.5em}
\caption{\small{Modal axioms and their corresponding frame conditions.\vspace{-1em}}}\label{table:axiom}
\end{center}
\end{table}

\begin{table}[t]
\begin{center}
\setlength{\tabcolsep}{12pt}
\footnotesize{\begin{tabular}{ |c| c| c| }
\hline
Logic \sf L &  Axiomatization &  Class of \sf L-frames $(W, R)$\\
\hline
$\Kfive$ & Definition \ref{Dfn:K5} & $W=\{\rho\}$ or $W=\{\rho\} \sqcup C$\\
\hline
$\KDfive$ & $\Kfive + \sf d$ & $W=\{\rho\} \sqcup C$\\
\hline
$\Kfourfive$ & $\Kfive + \sf 4$ & $W=\{\rho\}$ or ($W=\{\rho\} \sqcup C$ and $\rho R C$)\\
\hline
$\KDfourfive$ & $\Kfive + \sf d +\sf 4$ & $W=\{\rho\} \sqcup C$ and $\rho R C$\\
\hline
$\KBfive$ & $\Kfive + \sf b$  &  $W=\{\rho\}$ or $W=C$\\
\hline
$\Sfive$ & $\Kfive + \sf t$ & $W=C$\\
\hline
\end{tabular}}\vspace{0.5em}
\parbox{0.9\textwidth}{\caption{\small{Semantics for extensions of $\Kfive$ (see \cite{nag,pietruszczak2020}).
Everywhere not $\rho R \rho$\\ for the root~$\rho$, set $C$ is a  finite cluster, and $\sqcup$ denotes disjoint union.\vspace{-1em}}}
}
\label{table:logics}
\end{center}
\end{table}

We work with specific classes of Kripke models sound and complete w.r.t. the logics. The respective frame conditions for the logic $\logic$, called \emph{$\logic$-frames}, are defined in Table~\ref{table:logics}. A model $(W,R,V)$ is an \emph{$\sf L$-model} if{f} $(W,R)$ is an $\sf{L}$-frame. Table~\ref{table:logics} is a refinement of Theorem \ref{thm: nag}, particularly shown for $\Kfourfive$, $\KDfourfive$, and $\KBfive$ in~\cite{pietruszczak2020}. More precisely, we consider rooted frames and completeness w.r.t.~the root, i.e., $\vdash_\logic \phi$ if{f} for all $\logic$-models $\calM$ with root $\rho$, $\calM, \rho \vDash \phi$ (we often denote the if-condition as $\vDash_\logic \phi$). For each logic, this follows from easy bisimulation arguments.\looseness=-1

\begin{theorem}[\cite{nag}] \label{thm: nag} Any normal modal logic containing\/~$\Kfive$ is sound and complete w.r.t.~a class of finite Euclidean Kripke frames\/ $(W, R)$ of  one of the following forms:\/ $(a)$~$W = \{\rho\}$ consists of a singleton root and $R=\emptyset$, $(b)$ the whole~$W$ is a cluster (any world can be considered its root), or $(c)$ $W \backslash\{\rho\}$ is a cluster for a (unique) root $\rho \in W$ such that $\rho R w$ for some $w \in W \backslash\{\rho\}$ while not $\rho R \rho$.
\end{theorem}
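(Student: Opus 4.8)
The plan is to split the statement into a routine soundness part, a standard completeness-and-finiteness part, and a purely combinatorial structural core that carries the real content. For soundness I would first observe that every frame of the three listed forms is Euclidean: this is immediate for form~$(a)$ since $R=\emptyset$, and for forms~$(b)$ and~$(c)$ it holds because the total relation on a cluster trivially satisfies $wRv \wedge wRu \Rightarrow vRu$, while an irreflexive root contributes no offending triple. Hence any logic sound for Euclidean frames is sound for these, so for $\logic \supseteq \Kfive$ I would take $\calC_\logic$ to be exactly the subclass of frames of forms~$(a)$--$(c)$ that validate~$\logic$; soundness then holds by construction.

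For completeness I would produce, for each $\phi \notin \logic$, a finite refuting frame of the required form in three moves. First, canonical-model completeness: since $\Diamond\phi \to \Box\Diamond\phi$ is Sahlqvist, it is canonical, so the canonical frame of~$\logic$ is Euclidean and refutes~$\phi$ at some world~$\rho$. Second, pass to the submodel generated by~$\rho$; generated submodels preserve truth and preserve Euclideanness, yielding a rooted Euclidean countermodel. Third, finitize: apply the structural lemma below to read off the shape, then shrink the (possibly infinite) cluster to a finite subcluster by a subformula filtration realizing the finitely many diamonds occurring in~$\phi$. Because $R$ is total inside a cluster, this is essentially the finite-model argument for~$\Sfive$, and it leaves the overall shape intact, so the finite frame still validates~$\logic$.

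The combinatorial heart is the following structural lemma, whose proof uses only three elementary consequences of Euclideanness and never needs finiteness: from $wRv \wedge wRv \Rightarrow vRv$ every successor is reflexive; by definition each successor set $R(w)$ is a cluster; and from $wRv \wedge wRu \Rightarrow vRu$ one gets the monotonicity $R(w) \subseteq R(v)$ whenever $wRv$. Writing $S = R(\rho)$, the first two facts make $S$ a cluster, and monotonicity together with mutual access inside~$S$ shows $R(v)$ is the same cluster~$C$ for every $v \in S$. I would then prove that $C$ is closed under~$R$: if $x \in C$ and $xRy$, then $y$ is reflexive, and monotonicity along $vRx$ and $xRy$ gives $C = R(v) \subseteq R(x) \subseteq R(y)$, so $yRz$ for all $z \in C$; in particular $yRv$, and since $yRy$ and $yRv$ both put $y,v$ in the cluster $R(y)$, we get $vRy$, i.e.\ $y \in R(v) = C$. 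As every world is reachable from~$\rho$, this yields $W = \{\rho\} \cup C$, and splitting on whether $C = \emptyset$ and whether $\rho \in C$ produces exactly forms~$(a)$, $(b)$, and~$(c)$.

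The step I expect to be the genuine obstacle is the uniform finitization over all normal extensions at once: the structural lemma is clean and assumption-free, and canonical completeness is routine, but one must shrink the cluster while preserving both the refutation of~$\phi$ and the validity of whichever of the axioms $\mathsf{4}, \mathsf{d}, \mathsf{b}, \mathsf{t}$ belong to~$\logic$. This is where the local finiteness (local tabularity) of the cluster fragment is the right tool, and where care is needed to keep the root's access to the cluster exactly as the chosen form demands, so that the resulting finite frame is still an $\logic$-frame.
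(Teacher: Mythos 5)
The paper itself gives no proof of this theorem---it is imported verbatim from Nagle \cite{nag}---so there is no in-paper argument to compare against, and your proposal must be judged on its own merits. Its soundness half and its structural core are correct. The three consequences of Euclideanness you isolate (every successor is reflexive, each $R(w)$ is a cluster, and $wRv$ implies $R(w)\subseteq R(v)$) are all valid, and your derivation that the point-generated subframe equals $\{\rho\}\cup C$ with $C$ closed under $R$, splitting into exactly forms $(a)$--$(c)$, goes through; this is the standard analysis of rooted Euclidean frames and matches what the paper's Table~\ref{table:logics} presupposes.

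The genuine gap is in the finitization step, and it is twofold. First, the theorem quantifies over \emph{all} normal extensions of $\Kfive$, whereas your plan to ``preserve the validity of whichever of the axioms $\mathsf{4},\mathsf{d},\mathsf{b},\mathsf{t}$ belong to $\logic$'' tacitly restricts $\logic$ to the sixteen Sahlqvist combinations; normal extensions of $\Kfive$ that, e.g., bound the cluster size are not touched by your argument at all. Second, even for a fixed $\logic$, the claim ``it leaves the overall shape intact, so the finite frame still validates $\logic$'' is a non sequitur: having one of the shapes $(a)$--$(c)$ guarantees only $\Kfive$-validity (a shape-$(c)$ frame whose root sees only part of the cluster already fails $\mathsf{4}$), and frame validity of $\logic$ quantifies over all valuations, so it is not inherited from the worlds of your filtrated model being $\logic$-maximal consistent sets. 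To close the argument you would need (i)~canonicity of $\logic$, so that the generated canonical subframe is itself an $\logic$-frame---available for the Sahlqvist combinations but not granted for arbitrary $\logic$---and (ii)~a proof that cluster-shrinking preserves frame validity, for instance by arranging the witness selection so that the finite frame is a bounded-morphic image of the infinite one (any surjection between clusters is a p-morphism, so this is arrangeable, but it is absent from your sketch). Falling back on local tabularity of all $\Kfive$-extensions does not repair this: that is the Nagle--Thomason theorem \cite{NaglyThomason85}, a later result at least as strong as the statement you are proving, so invoking it replaces the proof rather than completing it.
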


\begin{definition}[UIP and ULIP] \label{Dfn: ULIP}
A logic\/~$\L$ has the \emph{uniform interpolation property~(UIP)} if{f} for any formula $\phi$
and $p \in \Prop$ there is a
formula $\forall p \phi$
such that 
\begin{compactenum}[(1)]
\item\label{UIP:i} $\Lit(\forall p  \phi) \subseteq \Lit(\phi) \setminus \{ p, \overline{p} \}$,
\item\label{UIP:ii}
$\vdash_\L \forall p \phi \to \phi$, and
\item\label{UIP:iii}
$\vdash_\L \psi \rightarrow \phi  \text{ implies }  \vdash_\logic \psi \rightarrow \forall p \phi$ for any formula $\psi$ with $p,\overline{p} \notin \Lit(\psi)$.
\end{compactenum}

A logic\/~$\L$ has the \emph{uniform Lyndon interpolation property~(ULIP)}~\textup{\cite{Kur20,tab2022uniform}}
if{f} for any formula $\phi$ and $\ell \in \Lit$, there is a formula\/~$\forall \ell \phi$ such that 
\begin{compactenum}[(i)]
\item\label{ULIP:i} $\Lit(\forall \ell  \phi) \subseteq \Lit(\phi) \setminus \{ \ell \}$,
\item\label{ULIP:ii}
$\vdash_\L \forall \ell \phi \to \phi$, and
\item\label{ULIP:iii}
$\vdash_\L \psi \rightarrow \phi  \text{ implies }  \vdash_\logic \psi \rightarrow \forall \ell \phi$ for any formula $\psi$ with $\ell \notin \Lit(\psi)$.
\end{compactenum}
We call $\forall p \phi$ $(\forall \ell \phi)$ the \emph{uniform (Lyndon) interpolant of $\phi$ w.r.t.~atom $p$ (literal~$\ell$)}.
\end{definition}

\noindent
{These are often called  \emph{pre-interpolants} as opposed to their dual \emph{post-interpolants} that, in classical logic, can be defined as $\exists p \phi = \overline{\forall p \overline{\phi}}$ and $\exists \ell \phi = \overline{\forall \overline{\ell} \overline{\phi}}$ (see, e.g., \cite{tab2022uniform,Bilkova06PhD,vdGiessen22PhD,Kur20} for more explanations).} 

\begin{theorem}
If a logic\/~$\L$ has the ULIP, 
then it also has the UIP.
\end{theorem}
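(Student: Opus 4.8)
The plan is to obtain the uniform interpolant by eliminating both polarities of the atom~$p$ through two applications of the uniform Lyndon interpolation operator. Given a formula~$\phi$ and an atom~$p \in \Prop$, I would set
\[
  \forall p\,\phi \ce \forall p\bigl(\forall \overline{p}\,\phi\bigr),
\]
where on the right-hand side $\forall \overline{p}$ and $\forall p$ denote the uniform Lyndon interpolants guaranteed by the ULIP for the literals $\overline{p}$ and $p$, respectively. Write $A \ce \forall \overline{p}\,\phi$ for the inner interpolant, so that the claimed uniform interpolant is $\forall p\, A$. The whole argument then amounts to checking the three UIP conditions against the three ULIP conditions, using only propositional reasoning available in every normal modal logic extending classical logic.

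First I would verify condition~(\ref{UIP:i}) by bookkeeping of literal sets: ULIP~(\ref{ULIP:i}) gives $\Lit(A) \subseteq \Lit(\phi) \setminus \{\overline{p}\}$, and a second application gives $\Lit(\forall p\, A) \subseteq \Lit(A) \setminus \{p\} \subseteq \Lit(\phi) \setminus \{p, \overline{p}\}$, which is exactly UIP~(\ref{UIP:i}). Next, for condition~(\ref{UIP:ii}), ULIP~(\ref{ULIP:ii}) yields both $\vdash_\L \forall p\, A \to A$ and $\vdash_\L A \to \phi$; chaining these two implications by classical propositional reasoning gives $\vdash_\L \forall p\, A \to \phi$, i.e.,~UIP~(\ref{UIP:ii}). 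Finally, for condition~(\ref{UIP:iii}), suppose $\vdash_\L \psi \to \phi$ for some~$\psi$ with $p, \overline{p} \notin \Lit(\psi)$. Since in particular $\overline{p} \notin \Lit(\psi)$, ULIP~(\ref{ULIP:iii}) applied to~$A$ yields $\vdash_\L \psi \to A$; and since $p \notin \Lit(\psi)$, a second application of ULIP~(\ref{ULIP:iii}), now to $\forall p\, A$, yields $\vdash_\L \psi \to \forall p\, A$, which is UIP~(\ref{UIP:iii}).

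I do not expect a genuine obstacle here: the statement is essentially immediate once the correct composition is chosen. The only points requiring care are that $p$ and $\overline{p}$ are \emph{distinct} literals, so that removing one polarity at a time indeed strips the atom entirely, and that the one-sided literal side-condition of ULIP~(\ref{ULIP:iii}) is met at each step — this holds because the two-sided hypothesis $p,\overline{p} \notin \Lit(\psi)$ of UIP~(\ref{UIP:iii}) trivially implies each single-polarity condition $\overline{p} \notin \Lit(\psi)$ and $p \notin \Lit(\psi)$ needed for the inner and outer applications. The order of elimination is irrelevant, and no semantic input is required beyond the purely syntactic ULIP data.
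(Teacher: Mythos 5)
Your proposal is correct and coincides with the paper's own proof: both define the uniform interpolant as $\forall p\,\forall\overline{p}\,\phi$, eliminating the two polarities of $p$ by two applications of the ULIP operator, and verify the three UIP conditions from the corresponding ULIP conditions in exactly the same way. Nothing is missing; your added remarks about the one-sided literal side-conditions are sound and match the paper's reasoning.
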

\begin{proof}
We define a uniform interpolant of $\phi$ w.r.t.~atom $p$ as
a uniform Lyndon interpolant~$\forall p \forall \overline{p} \phi$ of~$\forall \overline{p} \phi$ w.r.t.~literal $p$. We need to demonstrate conditions LIP\eqref{UIP:i}--\eqref{UIP:iii} from Def.~\ref{Dfn: ULIP}.
First, it follows from ULIP\eqref{ULIP:i} that $\Lit(\forall p \forall \overline{p} \phi) \subseteq \Lit(\forall \overline{p} \phi) \setminus \{p\} \subseteq \Lit(\phi) \setminus \{p, \overline{p}\}$. 
Second, $\vdash_{\L} \forall p \forall \overline{p} \phi \to \forall\overline{p} \phi$ and $\vdash_{\L} \forall \overline{p} \phi \to \phi $ by ULIP\eqref{ULIP:ii}, hence, $\vdash_{\L} \forall p \forall \overline{p} \phi \to \phi$. 
Finally, if $\vdash_{\L} \psi \to \phi$ where $p, \overline{p} \notin \Lit(\psi)$, then  by 
ULIP\eqref{ULIP:iii}, $\vdash_{\L} \psi \to \forall \overline{p} \phi$ as $\overline{p} \notin \Lit(\psi)$ and $\vdash_{\L} \psi \to \forall p \forall \overline{p} \phi$ as $p \notin \Lit(\psi)$. \qed
\end{proof}

\section{Layered Sequents}

\begin{definition}[Layered sequents]\label{def:RootedLayeredHypersequents}
A \emph{layered sequent} is a generalized one-sided sequent of the form
\begin{equation}
    \label{eq:labelled}
\mathcal{G}= \Gamma_1, \ldots, \Gamma_n, [\Sigma_1], \ldots, [\Sigma_m],  [[\Pi_1]] , \ldots,  [[\Pi_k]] 
\end{equation}
where~$\Gamma_i, \Sigma_i, \Pi_i$ are finite  multisets of formulas, $n,m,k \geq 0$, and  if \mbox{$k\geq 1$}, then $m \geq 1$.  A layered sequent is an\/ $\logic$-sequent if{f} it satisfies the conditions in the rightmost column of Table~\ref{table:modal}. 
Each~$\Sigma_i$, each~$\Pi_i$, and\/~$\bigcup_i \Gamma_i$ is called a \emph{sequent component} of~$\calG$.
The \emph{formula interpretation} of a layered sequent~$\calG$  above is: \looseness=-1
\[
\iota(\calG ) = \bigvee\nolimits_{i=1}^n \big(\bigvee\Gamma_i\big) \vee \bigvee\nolimits_{i=1}^m \Box \big(\bigvee \Sigma_i\big) \vee \bigvee\nolimits_{i=1}^k \Box \Box \big(\bigvee \Pi_i\big).
\]
\end{definition}

Layered sequents are denoted by $\calG$ and $\calH$. The structure of a layered sequent can be viewed as at most two layers of hypersequents (\emph{$[\enspace]$-components} $\Sigma_i$ and \emph{$[[\enspace]]$-components} $\Pi_i$ forming the  first and second layer respectively) possibly nested on top of the sequent component~$\bigcup_i \Gamma_i$ as the root. Following the arboreal terminology from~\cite{kuznetsLellmann16}, the root is called the \emph{trunk} while $[\enspace]$- and $[[\enspace]]$-components form the \emph{crown}. Analogously to  nested sequents representing  tree-like Kripke models, the structure of   $\logic$-sequents is in line with the structure of $\logic$-models introduced in Sect.~\ref{sec:preliminaries}. We view sequents components as freely permutable, e.g., $[[\Pi_1]], \Gamma_1, [\Sigma_1], \Gamma_2$ and $\Gamma_1, \Gamma_2, [\Sigma_1], [[\Pi_1]]$ represent the same layered sequent.

\begin{table}[t]
\begin{center}
\footnotesize{\begin{tabular}{ | l | c c c c c c c c| c| c| }
\hline
\ Calculus \  &  \multicolumn{8}{c|}{Sequent rules} & Conditions on layered sequents \\
\hline
\ $\layKfive$ & \ $\Box_t$ & &  $\Diamond_t$ & &  $\Box_{c'}$ & & &  & $n \geq 1, \; m,k \geq 0 $ \\
\hline
\ $\layKDfive$ & \  $\Box_t$ & & $\Diamond_t$  & & $\Box_{c'}$ & $\ruled_t$ & & $\ruled_{c'}$ \  & $n \geq 1, \; m,k \geq 0 $ \\
\hline
\ $\layKfourfive$ & \ $\Box_t$ & & $\Diamond_t$ &  $\Box_{c}$ & & & & & $n \geq 1, \; m \geq 0, \; k=0 $  \\
\hline
\ $\layKDfourfive$ & \ $\Box_t$ &  & $\Diamond_t$ & $\Box_{c}$ &  & $\ruled_t$ &  $\ruled_c$ &   & $n \geq 1, \; m \geq 0, \; k=0 $    \\
\hline
\ $\layKBfive$ & & \ $\Box_{t'}$ & &  $\Box_{c}$ & & & & &  \ $n =0,  m \geq 2,  k=0 \;$ or $\; n =1,  m =0,  k=0 $  \ \\
\hline
\ $\laySfive$ & & & & $\Box_{c}$ & & & &  & $n =0, \; m \geq 1, \; k=0 $  \\
\hline
\end{tabular}}\vspace{0.5em}
\caption{\small{Layered sequent calculi $\layL$: in addition to explicitly stated rules, all  $\layL$ have axioms $\idTnd$ and $\idTop$ and rules $\vee$, $\wedge$, $\Diamond_c$, and $\rulet$ (see Fig.~\ref{fig:rules}). Note that the rules of system~$\layL$ may only be applied to $\logic$-sequents.\vspace{-1em}}}\label{table:modal}
\end{center}
\end{table}
\begin{figure}[t]
\centering
  \fbox{\parbox{0.98\textwidth}{
   \begin{center}
      $\vlinf{\idTnd}{}{\mathcal{G}\vlfill{p,\overline p}}{}$
      \quad
      $\vlinf{\idTop}{}{\mathcal{G}\vlfill{\top}}{}$
      \quad
      $\vliinf{\land}{}{\mathcal{G}\vlfill{\varphi \land \psi}}{\mathcal{G}\vlfill{\varphi \land \psi,\varphi}}{\mathcal{G}\vlfill{\varphi \land \psi,\psi}}$       \end{center}  
      \begin{center}    
      $\vlinf{\lor}{}{\mathcal{G}\vlfill{\varphi \lor \psi}}{\mathcal{G}\vlfill{\varphi \lor \psi,  \varphi,\psi}}$
        \quad
      $\vlinf{\Box_t}{}{\mathcal{G}, \Box \varphi}{\mathcal{G}, \Box \varphi,[\varphi]}$
      \quad
      $\vlinf{\Box_{t'}}{}{\Sigma, \Box \varphi}{[\Sigma, \Box \varphi],[\varphi]}$
      \quad
      $\vlinf{\Diamond_t}{}{\mathcal{G}, \Diamond \varphi, [\Sigma]}{\mathcal{G}, \Diamond \varphi, [\Sigma, \phi]}$
      \end{center}
     \begin{center}
      $\vlinf{\Box_c}{}{\mathcal{G}, \llbracket \Sigma, \Box \varphi \rrbracket}{\mathcal{G}, \llbracket \Sigma, \Box \varphi \rrbracket, [\phi]}$
      \quad
      $\vlinf{\Box_{c'}}{}{\mathcal{G}, \llbracket \Sigma, \Box \varphi \rrbracket}{\mathcal{G}, \llbracket \Sigma, \Box \varphi \rrbracket,  [[\phi]]}$
      \quad
      $\vlinf{\Diamond_c}{}{\mathcal{G}, \llbracket \Sigma, \Diamond \varphi \rrbracket, \llparenthesis \Pi \rrparenthesis}{\mathcal{G}, \llbracket \Sigma, \Diamond \varphi \rrbracket, \llparenthesis \Pi, \phi \rrparenthesis}$
      \end{center}   
      \begin{center}
     $\vlinf{\ruled_t}{}{\mathcal{G}, \Diamond \varphi}{\mathcal{G}, \Diamond \varphi,[\varphi]}$
      \quad
      $\vlinf{\ruled_{c}}{}{\mathcal{G}, \llbracket \Sigma, \Diamond \varphi \rrbracket}{\mathcal{G}, \llbracket \Sigma, \Diamond \varphi \rrbracket, [\phi]}$
      \quad
     $\vlinf{\ruled_{c'}}{}{\mathcal{G}, \llbracket \Sigma, \Diamond \varphi \rrbracket}{\mathcal{G}, \llbracket \Sigma, \Diamond \varphi \rrbracket, [[\phi]]}$
\quad
          $\vlinf{\rulet}{}{\mathcal{G}, \llbracket \Sigma, \Diamond \varphi \rrbracket}{\mathcal{G}, \llbracket \Sigma, \Diamond \varphi , \phi \rrbracket}$
      \end{center}
  }}
\caption{Layered sequent rules: brackets $\llbracket \enspace \rrbracket$ and $ \llparenthesis \enspace \rrparenthesis$ range over both $[\enspace]$ and $[[\enspace]]$.} 
\label{fig:rules}
\end{figure}

{\begin{remark}The layered calculi presented here generalize grafted hypersequents of~\cite{kuznetsLellmann16} and, hence, similarly combine features of hypersequents and nested sequents. In particular, layered sequents are generally neither pure hypersequents (except for the case of $\Sfive$) nor bounded-depth nested sequents. The latter is due to the fact that the defining property of nested sequents is the tree structure of the sequent components, whereas the crown components of a layered sequent form a cluster. Although formally grafted hypersequents are defined with one layer only,  this syntactic choice is more of a syntactic sugar than a real distinction. Indeed, the close relationship of one-layer grafted hypersequents for $\Kfive$~and~$\KDfive$ in~\cite{kuznetsLellmann16} to the two-layer layered sequents presented here clearly manifests itself when translating grafted hypersequents into the prefixed-tableau format  (see grafted tableau system for $\Kfive$~\cite[Sect.~6]{kuznetsLellmann16}). There prefixes for the crown are separated into two types, limbs and twigs, which match the separation into $[\enspace]$-~and~$[[\enspace]]$-components.\end{remark}}

For a layered sequent~\eqref{eq:labelled}, we assign labels to the components as follows: 
the trunk is labeled~$\bullet$, $[ \enspace]$-components get distinct  labels $\bullet 1, \bullet 2, \dots$, and  $[[ \enspace]]$-components get  distinct labels~$1,2, \dots$. 
We let $\sigma, \tau, \dots$ range over these labels. The set of labels is denoted $\Lab(\calG)$ and $\sigma \in \calG$ means $\sigma \in \Lab(\calG)$. We write $\sigma : \phi\in \calG$ (or $\sigma : \phi$ if no confusion occurs) when a formula $\phi$ occurs in a sequent component of $\calG$ labeled by $\sigma$.

\begin{example}
$\mathcal{G}= \phi, \psi, [\chi], [\xi], [[\theta]]$ is a layered sequent with the trunk and three crown components: two $[\enspace]$-components and one $[[\enspace]]$-component. Since it has both the trunk and a $[[ \enspace]]$-component, it can only be a $\Kfive$- or $\KDfive$-sequent. A corresponding labeled sequent is $\mathcal{G}= \phi_\bullet, \psi_\bullet,  [\chi]_{\bullet 1}, [\xi]_{\bullet 2}, [[\theta]]_1$, with the set $\Lab(\mathcal{G}) = \{\bullet, \bullet 1, \bullet 2, 1\}$ of four labels. Similarly, for the $\KBfive$/$\Sfive$-sequent $\cal H= [\sigma], [\delta]$, a corresponding labeled sequent is $\mathcal{H}= [\sigma]_{\bullet 1}, [\delta]_{\bullet 2}$  with $\Lab(\mathcal{H}) = \{\bullet 1, \bullet 2\}$.
\end{example}

We sometimes use \emph{unary contexts}, i.e.,  layered sequents with exactly one \emph{hole}, denoted  $\{ \enspace \}$. Such contexts are denoted by $\calG\{ \enspace \}$. The insertion $\calG\{\Gamma\}$ of a finite multiset $\Gamma$ into $\calG\{ \enspace \}$ is obtained by replacing $\{ \enspace \}$ with $\Gamma$. The hole $\{ \enspace \}$ in a component $\sigma$  can also be labeled $\calG\{ \enspace \}_\sigma$. We use the notations $\llbracket \enspace \rrbracket$ and $ \llparenthesis \enspace \rrparenthesis$ to refer to either of $[\enspace]$ or $[[\enspace]]$. 

Using Fig.~\ref{fig:rules} and the middle column of Table~\ref{table:modal}, we define  layered sequent calculi $\layKfive$, $\layKDfive$, $\layKfourfive$, $\layKDfourfive$, $\layKBfive$,~and~$\laySfive$, where~$\layL$ is the calculus for the logic~$\logic$.  Following the terminology from~\cite{kuznetsLellmann16}, we split all modal rules into  \emph{trunk rules} (subscript $t$) and \emph{crown rules} (subscript $c$) depending on the position of the \emph{principal} formula. We write $\vdash_\layL \calG$ if{f} $\calG$ is derivable in $\layL$.

\begin{definition}[Saturation]
Labeled formula $\sigma: \phi \in \calG$ is \emph{saturated for\/~$\layL$}  if{f}
 \begin{compactitem}
    \item $\phi$ equals $p$ or $\overline{p}$ for an atom $p$, or equals\/ $\bot$, or equals\/ $\top$;   
        \item $\phi = \phi_1 \wedge \phi_2$ and $\sigma : \phi_i \in \calG$ for some $i$;
        \item $\phi = \phi_1 \vee \phi_2$ and both $\sigma : \phi_1 \in \calG$ and $\sigma : \phi_2 \in \calG$; 
        \item $\phi=\Box \phi'$, the unique rule applicable to $\sigma: \Box \phi'$ in\/ $\layL$ is either\/ $\Box_t$ or\/ $\Box_c$ $($i.e., a rule creating  a\/ $[\enspace]$-component$)$, and\/ $\bullet i : \phi' \in \calG$ for some $i$;
        \item $\phi=\Box \phi'$, the unique rule applicable to $\sigma: \Box \phi'$ in\/ $\layL$ is\/ $\Box_{c'}$ $($i.e., a rule creating  a\/ $[[\enspace]]$-component$)$, and $i : \phi' \in \calG$ for some $i$.
    \end{compactitem}
    In addition, we define for any label $\sigma$ and formula $\phi$:
    \begin{compactitem}
        \item $\sigma: \Diamond \phi$ is \emph{saturated w.r.t.}\/~$\bullet \in \Lab(\calG)$;
    \item $\sigma: \Diamond \phi$ is \emph{saturated w.r.t.~a label}\/ $\bullet i \in \Lab(\calG)$ if{f}\/ $\bullet i : \phi \in \calG$;
    \item $\sigma: \Diamond \phi$  is \emph{saturated w.r.t.~a label} $i\in \Lab(\calG)$ if{f} $\sigma = \bullet$ or $i : \phi \in \calG$;
        \item $\sigma : \Diamond \phi$ is\/ \emph{$\ruled_t$-saturated} if{f} $\sigma \ne \bullet$ or\/ $\bullet i : \phi \in \calG$ for some $i$; 
        \item
        $\sigma : \Diamond \phi$  is\/ \emph{$\ruled_c$-saturated} if{f} $\sigma = \bullet$ or\/ $\bullet i : \phi \in \calG$ for some $i$; 
        \item $\sigma : \Diamond \phi$  is\/ \emph{$\ruled'_c$-saturated} if{f} $\sigma = \bullet$ or $i: \phi \in \calG$ for some~$i$.
            \end{compactitem}
$\calG$ is \emph{propositionally saturated} if{f} all\/ $\vee$- and\/ $\wedge$-formulas are saturated in $\calG$.
$\logic$-sequent $\calG$ is\/ \emph{$\logic$-saturated} if{f} $a)$~each non-$\Diamond$ formula is saturated, $b)$~each $\sigma : \Diamond \phi$ is saturated w.r.t.~every label in $\Lab(\calG)$, $c)$~each $\sigma : \Diamond \phi$ is\/ ${\sf d}$-saturated whenever\/ ${\sf d} \in \layL \cap \{\ruled_t, \ruled_c, \ruled_{c'} \}$, and $d)$~$\calG$ is not of the from $\calH\{\top\}$ or $\calH\{q, \overline{q}\}$ for some $q \in \Prop$.\looseness=-1
\end{definition}

\begin{theorem}\label{thm:termination_proof_search}
 Proof search in\/~${\layL}$ modulo saturation terminates and provides an optimal-complexity decision algorithm, i.e., runs in co-NP time.
\end{theorem}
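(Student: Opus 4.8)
The plan is to prove two things separately: (a) that proof search modulo saturation always terminates, and (b) that a terminating search can be organized so as to yield a co-NP decision procedure, which is optimal by the known co-NP-hardness of these logics \cite{HalReg}.

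For termination, the key is that the saturation conditions act as a local loop-check that bounds both the size of each component and the number of components. First I would observe that every rule in Figure~\ref{fig:rules} is \emph{analytic}: reading a rule bottom-up, the principal formula is preserved and the newly introduced formulas are proper subformulas of formulas already present in the sequent (for the propositional and $\Box_t$/$\Box_c$/$\Box_{c'}$ rules) or copies of a $\Diamond$-body into an existing or freshly created component (for the $\Diamond$- and $\ruled$-rules). Hence the set of formulas that can ever label any component is confined to the (finite) subformula closure of the root sequent, call it $\mathit{sf}$; since components are multisets but saturation never adds a formula already present in a component, each component contains at most $|\mathit{sf}|$ distinct formulas. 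I would then bound the number of components: a new $[\enspace]$- or $[[\enspace]]$-component is created only by a $\Box$- or $\ruled$-rule whose principal $\Box\phi'$ (resp.\ $\Diamond\phi'$) is in $\mathit{sf}$, and the corresponding saturation clause blocks a second creation once a component already witnesses $\phi'$ at the appropriate layer. Thus the number of crown components is also bounded by a polynomial in $|\mathit{sf}|$. Together these give a finite bound on the size of any saturated $\logic$-sequent reachable from the root, so every branch of proof search, pursued until saturation or closure, is finite; since each rule is finitely branching, König's lemma yields termination of the whole search tree.

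For the complexity bound, the point is that one does not need to build the full proof-search tree but only to guess and verify a single branch. I would show that $\vdash_\layL \calG$ fails if{f} there is a \emph{saturated} $\logic$-sequent $\calH$ reachable from $\calG$ by bottom-up rule applications that is not closed by $\idTnd$ or $\idTop$ (this is exactly the countermodel-yielding situation promised by the completeness argument referenced in the introduction). By the size bound from the previous paragraph, such a witness $\calH$ has size polynomial in $|\calG|$, and checking that $\calH$ is $\logic$-saturated and unclosed is a polynomial-time test against the saturation clauses. Hence non-derivability is in NP: guess the polynomial-size saturated sequent and a short certificate that it is reachable, then verify. Consequently derivability is in co-NP, matching the co-NP lower bound \cite{HalReg} and giving optimality.

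The main obstacle I expect is the bound on the number of components rather than their internal size, since the $\Diamond_t$/$\Diamond_c$/$\rulet$ rules and the interaction between the two crown layers (the $[\enspace]$- and $[[\enspace]]$-components, which together represent a cluster) mean that diamonds from \emph{every} label must be propagated into \emph{every} appropriate component, so one must argue carefully that saturation with respect to all labels still stabilizes without spawning new components indefinitely. A secondary subtlety is the NP-membership certificate for the failing branch: one must ensure that the reachability of the saturated $\calH$ from $\calG$ can be witnessed polynomially, i.e.\ that saturation can be reached by a polynomially long (not merely finite) sequence of rule applications, which again follows from the size bound since each application strictly increases the measure counting missing-but-required formulas and components, itself bounded by a polynomial in $|\mathit{sf}|$.
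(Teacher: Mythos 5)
Your proposal is correct and follows essentially the same route as the paper's proof: a polynomial bound on the set of labeled formulas (subformula closure for component contents, saturation blocking repeated $\Box$/$\ruled$-driven component creation for the component count), strict growth of that set under every saturation-permitted rule application to get termination with polynomially long branches, and co-NP membership by guessing and verifying a single saturated, unclosed branch, with optimality from the lower bound of \cite{HalReg}. The only blemish is a wording slip in your final sentence---each rule application strictly \emph{decreases} the count of missing-but-required formulas (equivalently, strictly enlarges the set of present labeled formulas, which is the paper's $F_\calH$)---but the surrounding argument makes the intended monotonicity clear.
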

\begin{proof}
    Given a proof search of layered sequent $\calG$, for each layered sequent $\calH$ in this proof search, consider its labeled formulas as a set $F_\calH=\{\sigma : \phi \mid \sigma : \phi \in \calH \}$. Let $s$ be the number of subformulas occurring in $\calG$ and $N$ be the number of sequent components in $\calG$. Since we only apply rules (that do not equal $\idTnd$ or $\idTop$) to non-saturated sequents, sets $F_\calH$ will grow for each premise. Going bottom-up in the proof search, at most $s$ labels of the form~$\bullet i$ and at most $s$ labels of the form $i$ can be created, and each label can have at most~$s$ formulas. Therefore, the cardinality of sets $F_\calH$ are bounded by  $s(N+s + s)$, which is polynomial in the size of $F_\calG$. Hence, the proof search terminates modulo saturation. Moreover, since each added labeled formula is linear in the size $F_\calG$ and the non-deterministic branching in the proof search is bounded by $(N+s+s)s(N+s+s)$, 
    again a polynomial in the size of $F_\calG$, this algorithm is co-NP, i.e., provides an optimal decision procedure for the logic.\qed
\end{proof}

\begin{definition}[Interpretations] \label{Dfn:  interpretation}
    An \emph{interpretation of an $\logic$-sequent~$\calG$ into an $\logic$-model $\calM = (W,R,V)$} is a function~$\calI : \Lab(\calG) \to W$ such that the following conditions  apply whenever the respective type of labels exists in~$\calG$: 
       
       \begin{compactenum}
        \item\label{interp:1} 
        $\calI(\bullet) = \rho$, where $\rho$ is the root of $\calM$;
        \item\label{interp:2} 
        $\calI(\bullet)R\;\calI(\bullet i)$ for each label of the form\/ $\bullet i \in \Lab(\calG)$;
        \item\label{interp:3}  
        $\calI(\bullet i) R\; \calI(j)$ and $\calI(j) R\; \calI(\bullet i)$ for all labels of the form\/ $\bullet i$ and $j$ in $\Lab(\calG)$;
        \item\label{interp:4}
        Not $\calI(\bullet) R\; \calI(j)$ for any label of the form $j \in \Lab(\calG)$. 
    \end{compactenum}
\end{definition}
Note that none of the conditions~\eqref{interp:1}--\eqref{interp:4} apply to layered $\Sfive$-sequents. 

\begin{definition}[Sequent semantics]
    For any given interpretation $\calI$ of an\/ $\logic$-sequent~$\calG$ into an\/ $\logic$-model $\calM$, 
\begin{center}
$    \calM, \calI \vDash \calG \qquad\text{ if{f} }\qquad \calM, \calI(\sigma) \vDash \phi \text{ for some } \sigma : \phi \in \calG.$
\end{center}
    $\calG$ is \emph{valid} in\/ $\logic$, denoted\/ $\vDash_\logic \calG$, if{f} $\calM, \calI \vDash \calG$ for all\/ $\logic$-models $\calM$ and interpretations~$\calI$ of $\calG$ into $\calM$. We omit\/~$\logic$ and $\cal M$ when  clear from the context.
\end{definition}

The proof of the following theorem is  based on a countermodel construction (with more standard parts of the proof relegated to the Appendix):
\begin{restatable}[Soundness and completeness]{theorem}{soundcomplete}
\label{thm:soundcomplete}
    For any\/ $\logic$-sequent $\calG$,
    \[
    \vdash_\layL \calG 
    \qquad\Longleftrightarrow\qquad
    \vDash_\logic \iota(\calG)
    \qquad\Longleftrightarrow\qquad
    \vDash_\logic \calG.
    \]
\end{restatable}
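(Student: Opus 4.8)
The plan is to close the three-way chain by proving two things: the purely semantic equivalence $\vDash_\logic \iota(\calG) \iff \vDash_\logic \calG$ as a bridging lemma, and the proof-theoretic equivalence $\vdash_\layL \calG \iff \vDash_\logic \calG$ split into soundness and completeness. For the bridge I would unfold both definitions against the root–cluster shape of $\logic$-frames from Table~\ref{table:logics}, reading $\calM,\rho\vDash\iota(\calG)$ disjunct by disjunct: a true $\bigvee\Gamma_i$ is some trunk formula true at $\calI(\bullet)=\rho$, a true $\Box\bigvee\Sigma_i$ forces some $\Sigma_i$-formula true at every successor, hence at $\calI(\bullet i)$, and a true $\Box\Box\bigvee\Pi_i$ forces some $\Pi_i$-formula true at every world two $R$-steps away, hence at $\calI(j)$. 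The delicate point is the converse, where from $\calM,\rho\nvDash\iota(\calG)$ one must build a \emph{refuting} interpretation: the witness falsifying $\Box\Box\bigvee\Pi_j$ sits in the cluster but must be installed at a $[[\enspace]]$-label, so it must satisfy $\text{not }\rho R\,\calI(j)$ of Definition~\ref{Dfn: interpretation}\eqref{interp:4}; since $\Kfive$-frames need not be transitive, I would, where necessary, pass to a bisimilar $\logic$-model in which the relevant cluster world is duplicated so that one copy is not a direct $R$-successor of $\rho$ — precisely the ``easy bisimulation argument'' already used for root-completeness. Conditions \eqref{interp:2}–\eqref{interp:3} then hold by totality of $R$ on the cluster.

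For soundness $\vdash_\layL\calG\Rightarrow\vDash_\logic\calG$ I would induct on derivation height. The axioms $\idTnd$ and $\idTop$ are valid because every interpretation makes one of $p,\overline p$ (respectively $\top$) true at the relevant label, and the propositional rules are routine. For each modal rule I would argue contrapositively, transporting a refuting interpretation of the conclusion to one of the premise(s) using conditions \eqref{interp:1}–\eqref{interp:4} together with the matching frame condition: for the box rules ($\Box_t,\Box_{t'},\Box_c,\Box_{c'}$) one extends the interpretation to the fresh label by a successor falsifying the principal box, placing it in the cluster so that \eqref{interp:2}–\eqref{interp:3} hold by totality; for $\Diamond_t$, $\Diamond_c$, and $\rulet$ the relevant successor is already named by an existing label; and for the $\ruled$-rules one invokes seriality. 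Because every rule of $\layL$ may be applied only to $\logic$-sequents, the side conditions of Table~\ref{table:modal} guarantee that the interpretations required at each step genuinely exist.

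The main obstacle, and where I would concentrate the effort, is completeness $\vDash_\logic\calG\Rightarrow\vdash_\layL\calG$, which I would prove contrapositively by a countermodel construction from failed proof search. If $\nvdash_\layL\calG$, then by Theorem~\ref{thm:termination_proof_search} proof search terminates with a finite tree in which some branch ends in an $\logic$-saturated sequent $\calH$ that is neither $\calH'\{\top\}$ nor $\calH'\{q,\overline q\}$; since rules only add labelled formulas upward, $\calG\subseteq\calH$ and $\Lab(\calG)\subseteq\Lab(\calH)$. From the labels of $\calH$ I would build an $\logic$-model $\calM$ with worlds $\Lab(\calH)$, with $R$ dictated by the frame shape for $\logic$ (the $[\enspace]$-labels as cluster successors of the root $\bullet$, the $[[\enspace]]$-labels lying in the cluster but not $R$-reached from $\bullet$, and a single cluster in the $\Sfive$ and $\KBfive$ cases), and with $V$ making $p$ true at $\sigma$ exactly when $\overline p$ occurs at $\sigma$, so that each literal of $\calH$ is falsified at its own label. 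I would then prove the truth lemma — for every $\sigma:\phi\in\calH$, $\calM,\sigma\nvDash\phi$ — by induction on $\phi$, where $\logic$-saturation supplies precisely the witnesses needed at $\wedge$, $\vee$, $\Box$, and $\Diamond$ (the several $\Diamond$-saturation clauses mirroring conditions \eqref{interp:2}–\eqref{interp:4}). Taking $\calI$ to be the identity inclusion of $\Lab(\calG)$ into $\Lab(\calH)=W$ then falsifies $\calG$, yielding $\nvDash_\logic\calG$. The hardest parts will be verifying that the constructed frame genuinely satisfies the $\logic$-frame conditions for each of the six logics and that the $[\enspace]$ versus $[[\enspace]]$ distinction is faithfully mirrored by one-step versus two-step reachability from the root, which is the real crux in the non-transitive cases $\Kfive$ and $\KDfive$.
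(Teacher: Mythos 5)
Your proposal is correct, but it closes the three-way equivalence by a genuinely different decomposition than the paper. The paper proves a \emph{cycle} of implications --- $\vdash_\layL \calG \Rightarrow\ \vDash_\logic \iota(\calG) \Rightarrow\ \vDash_\logic \calG \Rightarrow\ \vdash_\layL \calG$ --- so it only ever needs \emph{one} direction of your bridging lemma (namely $\vDash_\logic \iota(\calG) \Rightarrow\ \vDash_\logic \calG$, proved contrapositively using that $k\geq 1$ forces $m \geq 1$, whence $\calI(\bullet)R\,\calI(\bullet 1)R\,\calI(j)$), and it proves soundness by induction with respect to the \emph{formula} interpretation $\iota$, where the cluster structure makes rules like $\Diamond_c$ on $[[\enspace]]$-components immediate ($\calM,\rho \vDash \Box\Box\Diamond\phi$ once $\rho$ has a successor satisfying $\phi$) without ever touching the interpretation conditions. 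You instead prove both directions of the bridge plus soundness directly against the sequent semantics, and this is exactly where the extra cost lands: the hard bridge direction needs your copying-away-from-the-root trick (which is the paper's copying construction, there introduced only later for the interpolation argument) to install the $\Box\Box$-witness at a $[[\enspace]]$-label without violating the condition ``not $\calI(\bullet)\,R\,\calI(j)$'' in the non-transitive cases $\Kfive$ and $\KDfive$ --- and note that the \emph{same} repair is silently needed in your rule-by-rule soundness for $\Box_{c'}$ and $\ruled_{c'}$, since the falsifying successor of a crown world may happen to be a direct successor of the root; this is harmless (validity quantifies over all models, so refuting the premise in a bisimilar modified model suffices), but your sketch should say so. Your completeness half is essentially identical to the paper's: same worlds $\Lab(\calG')$, same valuation $V(p)=\{\sigma \mid \sigma:\overline{p}\in\calG'\}$, same relation (root related exactly to the $\bullet i$ labels, all crown labels one cluster), with the truth lemma driven by the saturation clauses; one small inaccuracy is your blanket claim $\Lab(\calG)\subseteq\Lab(\calH)$, which fails literally for $\KBfive$ because $\Box_{t'}$ relabels the trunk into a $[\enspace]$-component --- the paper patches this with its special $\KBfive$ clause and by setting $\calI(\bullet)=1$. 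On balance, the paper's cycle is leaner (it sidesteps the two delicate steps entirely), while your route buys a standalone, reusable semantic equivalence between $\iota(\calG)$ and the sequent semantics and a soundness proof that is modular rule by rule.
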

\begin{proof}
   We show a cycle of implications. The left-to-middle implication, i.e., that $\vdash_\layL \calG \Longrightarrow \ \vDash_\logic \iota(\calG)$, can be proved by induction on the  $\layL$-derivation  of $\calG$.

For the middle-to-right implication, i.e.,   $\vDash_\logic \iota(\calG) \Longrightarrow \ \vDash_\logic \calG$, 
let $\calG$ be a sequent of  form~\eqref{eq:labelled}.
We prove that $\calM, \calI \nvDash \calG$ implies $\calM, \calI(\bullet) \nvDash \iota(\calG)$ (if $n=0$, use $1$ in place of~$\bullet$). By definition, $\calI(\bullet)$ is the root of $\calM$. If $\calM, \calI \nvDash \calG$, then $\calI(\bullet) \nvDash \phi$ for all $\phi \in \bigcup\nolimits_{i=1}^n \Gamma_i$, for each $1 \leq i \leq m$ we have $\calI(\bullet i) \nvDash \psi$ for all $\psi \in \Sigma_i$, and for each $1 \leq i \leq k$ we have $\calI(i) \nvDash \chi$ for all $\chi \in  \Pi_i$. By Def.~\ref{Dfn: interpretation}, 
    in case $k \geq 1$ label~$\bullet 1$ is in $\calG$ and $\calI(\bullet)R\calI(\bullet 1)R\calI(i)$ for each $1 \leq i \leq k$. Therefore $\calM, \calI(\bullet) \nvDash \iota(\calG)$.

Finally, we prove  the right-to-left implication by contraposition using a countermodel construction: 
from a failed proof search of $\calG$, 
construct an $\logic$-model refuting  $\calG$ from~\eqref{eq:labelled}. In a failed proof-search tree (Theorem~\ref{thm:termination_proof_search}), since $\nvdash_\layL \calG$, at least one saturated leaf\looseness=-1
\vspace{-0.15em}
\[
\calG'=\Gamma', [\Sigma'_1], \ldots, [\Sigma'_m],  [\Sigma''_1], \ldots, [\Sigma''_{m'}],[[\Pi'_1]] , \ldots,  [[\Pi'_k]], [[\Pi''_{1}]] , \ldots,  [[\Pi''_{k'}]],\vspace{-0.15em}
\]
is such that $\bigcup_i \Gamma_i \subseteq \Gamma'$, $\Sigma_i \subseteq \Sigma'_i$, and $\Pi_i \subseteq \Pi'_i$ (or for $\KBfive$, if $\calG = \Gamma$, 
then $\calG' = \Gamma'$ for $\Gamma \subseteq \Gamma'$ or $[\Sigma], [\Sigma_1], \dots, [\Sigma_m]$ with $\Gamma \subseteq \Sigma$). Define $\calM = (W,R,V)$: 
\vspace{-1.5pt}
    \begin{gather*}
        W = \Lab(\calG'),\qquad\qquad V(p) = \{ \sigma \mid \sigma : \overline{p} \in \calG' \},\\
        R = \{ (\bullet, \bullet i) \mid \bullet i \in \Lab(\calG')\} \cup \{ (\sigma, \tau) \mid \sigma, \tau \in \Lab(\calG'), \sigma, \tau \neq \bullet \}.
    \end{gather*}
    Since $\calG'$ is saturated,  $\calM$ is an $\logic$-model. Taking $\calI$ of $\calG$ into $\calM$ as the identity function (or $\calI(\bullet) = 1$ in case of $\KBfive$), we have $\calM, \calI \nvDash \calG$ as desired. \qed
\end{proof}

\section{Uniform Lyndon Interpolation}
\begin{definition}[Multiformulas]\label{def:multiformula}
The grammar
\[
\mho \cce \sigma : \varphi \mid (\mho \spconj \mho) \mid (\mho \spdisj \mho)
\]
defines \emph{multiformulas}, 
where $\sigma: \phi$ is a labeled formula. $\Lab(\mho)$ denotes the set of labels of\/ $\mho$. An \emph{interpretation $\calI$} of a layered sequent $\cal G$ into a model $\calM$ is called an \emph{interpretation of a multiformula\/ $\mho$ into $\calM$} if{f} $\Lab (\mho) \subseteq \Lab(\calG)$. If $\calI$~is an  interpretation of\/ $\mho$ into $\calM $, we define $\calM, \calI \vDash \mho$ as follows:\\ 
\begin{listliketab} 
    \storestyleof{compactenum} 
        \begin{tabular}{lll}
             &$\calM,\calI \vDash \sigma : \varphi$\;& if{f} \; $\calM, \calI(\sigma) \vDash \varphi$,\\
             &$\calM, \calI \vDash \mho_1 \spconj \mho_2$\; & if{f} \; $\calM, \calI \vDash \mho_1$ and $\calM, \calI \vDash \mho_2$,\\
             &$\calM, \calI \vDash \mho_1 \spdisj \mho_2$\;& if{f} \; $\calM, \calI \vDash \mho_i$ for at least one~$i=1,2$.
        \end{tabular} 
\end{listliketab}\\
Multiformulas\/ $\mho_1$ and\/ $\mho_2$ are said to be \emph{equivalent}, denoted\/ $\mho_1 \equiv_\logic \mho_2$, or simply\/ $\mho_1 \equiv \mho_2$, if{f} $\calM,\calI \vDash \mho_1 \Leftrightarrow \calM, \calI \vDash \mho_2$ for any interpretation $\calI$ of both\/ $\mho_1$~and\/~$\mho_2$ into an\/ $\logic$-model $\calM$.
\end{definition}

\begin{lemma}[\cite{Kuz18APAL}] \label{Lem: SDNF}
    Any multiformula\/ $\mho$ can be transformed into an equivalent one in SDNF $($SCNF\/$)$ as a\/ $\spdisj$-disjunction $(\spconj$-conjunction$)$ of\/ $\spconj$-conjunctions $(\spdisj$-disjunctions$)$ of labeled formulas $\sigma : \varphi$ such that each label of\/~$\mho$ occurs exactly once per conjunct (disjunct). 
\end{lemma}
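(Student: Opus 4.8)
The plan is to regard a multiformula as an ordinary Boolean combination whose \emph{atoms} are the labeled formulas $\sigma : \varphi$, with $\spconj$ and $\spdisj$ behaving under $\vDash$ exactly like propositional conjunction and disjunction (this is immediate from the clauses in Definition~\ref{def:multiformula}). Hence $\spconj$ and $\spdisj$ satisfy, up to $\equiv_\logic$, the usual associativity, commutativity, idempotence and distributivity laws, so the purely Boolean part of rewriting $\mho$ into a $\spdisj$-disjunction of $\spconj$-conjunctions is the standard DNF construction. The one genuinely new requirement is the \emph{exactly once per conjunct} condition on labels, which I would secure using two label-management equivalences whose verification is a direct unfolding of the semantics: first, \textbf{merging}, $\sigma : \varphi_1 \spconj \sigma : \varphi_2 \equiv_\logic \sigma : (\varphi_1 \wedge \varphi_2)$, valid because $\calM, \calI(\sigma) \vDash \varphi_1$ and $\calM, \calI(\sigma) \vDash \varphi_2$ iff $\calM, \calI(\sigma) \vDash \varphi_1 \wedge \varphi_2$; and second, \textbf{padding}, the fact that $\sigma : \top$ holds under every interpretation (as $\calM, w \vDash \top$ at every world), so conjoining any multiformula with $\tau : \top$ leaves its truth value unchanged. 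For the dual SCNF statement the corresponding equivalences are $\sigma : \varphi_1 \spdisj \sigma : \varphi_2 \equiv_\logic \sigma : (\varphi_1 \vee \varphi_2)$ together with the always-false $\sigma : \bot$.

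I would then argue by structural induction on $\mho$. In the \emph{base case} $\mho = \sigma : \varphi$ is already in SDNF: it is a single conjunct whose only label $\sigma$ is exactly $\Lab(\mho)$. For the \emph{$\spconj$-step} $\mho = \mho_1 \spconj \mho_2$, the induction hypothesis gives each $\mho_i$ as a $\spdisj$-disjunction of conjuncts, every conjunct mentioning each label of $\Lab(\mho_i)$ exactly once; distributing $\spconj$ over $\spdisj$ rewrites $\mho$ as the $\spdisj$-disjunction, over all pairs, of $C \spconj D$ with $C$ from $\mho_1$ and $D$ from $\mho_2$. Such a combined conjunct mentions exactly $\Lab(\mho_1) \cup \Lab(\mho_2) = \Lab(\mho)$, with the labels in the intersection appearing twice; applying the merging equivalence collapses each such pair to a single occurrence, so no padding is needed here. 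For the \emph{$\spdisj$-step} $\mho = \mho_1 \spdisj \mho_2$, the two SDNFs simply concatenate into a disjunction of conjuncts, but a conjunct coming from $\mho_1$ mentions only $\Lab(\mho_1)$ and may miss labels of $\Lab(\mho_2) \setminus \Lab(\mho_1)$; I pad each such conjunct with $\tau : \top$ for every missing $\tau$ (and symmetrically for conjuncts from $\mho_2$), so that after this step every conjunct mentions each label of $\Lab(\mho)$ exactly once.

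The two points that require care, rather than a substantive obstacle, are the bookkeeping of labels and the applicability of $\equiv_\logic$. Since $\equiv_\logic$ only compares interpretations defined on the labels of both multiformulas, I must keep the label set fixed at $\Lab(\mho)$ throughout: merging and $\spconj$-distribution never introduce new labels, and padding uses only labels already in $\Lab(\mho)$, so both the input and the final normal form have label set exactly $\Lab(\mho)$ and the equivalence is well-posed. Moreover, merging and padding preserve truth at \emph{every} world of \emph{every} $\logic$-model, because they rest solely on the fixed semantics of $\wedge$, $\vee$, $\top$ and $\bot$ and hold uniformly and independently of the frame class of $\logic$; thus the transformation is an $\equiv_\logic$-equivalence for each logic under consideration. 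I expect no real difficulty, and the SCNF claim follows by the evident dualization ($\spconj \leftrightarrow \spdisj$, $\wedge \leftrightarrow \vee$, $\top \leftrightarrow \bot$).
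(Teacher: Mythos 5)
Your proof is correct and follows essentially the same route as the proof this lemma is imported from (the paper cites \cite{Kuz18APAL} rather than proving it): Boolean distributivity over the labeled-formula atoms, merging repeated labels via $\sigma:\varphi_1\spconj\sigma:\varphi_2\equiv\sigma:(\varphi_1\wedge\varphi_2)$, and padding missing labels with the always-true $\tau:\top$ (dually $\sigma:\bot$ and $\vee$-merging for SCNF). Your attention to keeping the label set fixed at $\Lab(\mho)$ so that $\equiv_\logic$ is well-posed is exactly the right bookkeeping and introduces no gap.
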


\begin{definition}[Bisimilarity] \label{def: Bisimilarity}  
Let $\calM=(W,R,V)$ and $\calM'=(W',R',V')$ be models and $\ell \in \Lit$. We say $\cal M'$ is \emph{$\ell$-bisimilar} to $\cal M$, denoted $\mathcal{M'} \leq_{\ell} \mathcal{M}$ if{f} there is a nonempty binary relation $Z \subseteq W \times W'$, called an \emph{$\ell$-bisimulation} between~$\mathcal{M}$~and~$\mathcal{M'}$, such that the following hold for every $w \in W$ and $w' \in W'$:
\begin{compactdesc}
\item[\textup{literals}$_{\ell}$.]
if $w Z w'$, then $a)$ $\mathcal{M}, w \vDash q$ if{f}  $\mathcal{M}', w' \vDash q$ for all atoms $q \notin \{\ell, \overline{\ell}\}$ and\\
 $b)$ if $\mathcal{M}', w' \vDash \ell$, then $\mathcal{M}, w \vDash \ell$; 
\item[\textup{forth}.]
if $w Z w'$ and $w R v$, then there exists $v'\in W'$ such that $v Zv'$ and $w' R' v'$; 
\item[\textup{back}.]
if $w Z w'$ and $w' R' v'$, then there exists $v \in W$ such that $v Zv'$ and $w R v$.
\end{compactdesc}
$\calM$ and $\calM'$ are \emph{bisimilar}, denoted $\calM \sim \calM'$, if{f} there is a  relation $Z \neq \emptyset$ satisfying\/ \textbf{\textup{forth}} and\/ \textbf{\textup{back}}, as well as part $a)$ of\/ \textbf{\textup{literals}}$_{\ell}$ for any $p\in \Prop$, in which case $Z$  is called a \emph{bisimulation}. We write\/ $($similarly for\/ $\sim$ instead of\/ $\leq_{\ell})$:\looseness=-1
\begin{compactitem} 
\item $(\calM', w') \leq_{\ell} (\calM,w)$ if{f} there is an $\ell$-bisimulation $Z$, such that $wZw'$;
\item $(\calM', \calI') \leq_{\ell} (\calM,\calI)$ for functions $\calI : X \to W$ and $\calI' : X \to W'$ if{f} there is an $\ell$-bisimulation~$Z$ such that $ \calI(\sigma)\, Z\, \calI'(\sigma)$ for each $\sigma \in X$.
\end{compactitem}
\end{definition}
Note that $\leq_{\ell}$ is a  preorder and we have
$\calM' \leq_\ell \calM$ if{f} $\calM \leq_{\overline{\ell}} \calM'$. 
By analogy with~\cite[Theorem 2.20]{blackburn}, we have the following immediate observation, which additionally holds for multiformulas $\mho$ (see the Appendix for a proof): 

\begin{restatable}{lemma}{modaleq}\label{lem_modal_equivalence}
     Let $\calI$ and~$\calI'$ be interpretations of a layered sequent $\calG$  into models~$\calM$~and~$\calM'$ respectively.
     \begin{compactenum}
        \item 
        Let $\ell \notin \Lit(\calG)$. If $(\calM', \calI') \leq_\ell (\calM, \calI)$, then   $\calM, \calI \vDash \calG$ implies $\calM', \calI' \vDash \calG$.
        \item \label{lem_modal_equivalence:full}
        If $(\calM, \calI) \sim (\calM', \calI')$, then $\calM, \calI \vDash \calG$ if{f} $\calM', \calI' \vDash \calG$.
     \end{compactenum}
\end{restatable}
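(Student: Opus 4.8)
The plan is to reduce the statement about layered sequents to a preservation claim about individual formulas, proved by induction on formula structure. By the definition of sequent semantics, $\calM, \calI \vDash \calG$ holds iff $\calM, \calI(\sigma) \vDash \phi$ for some labeled formula $\sigma : \phi \in \calG$, and $\ell \notin \Lit(\calG)$ entails $\ell \notin \Lit(\phi)$ for every such $\phi$. Hence part~1 follows once I establish the pointwise claim: if $Z$ is an $\ell$-bisimulation with $w Z w'$ and $\phi$ is a formula with $\ell \notin \Lit(\phi)$, then $\calM, w \vDash \phi$ implies $\calM', w' \vDash \phi$. Applying this with $w = \calI(\sigma)$ and $w' = \calI'(\sigma)$, which are $Z$-related since $(\calM',\calI') \leq_\ell (\calM,\calI)$, transfers the witnessing disjunct from $\calM,\calI$ to $\calM',\calI'$.

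The crux is the literal base case, where the asymmetry in the definition of $\ell$-bisimulation must be reconciled with the hypothesis $\ell \notin \Lit(\phi)$. For a literal $m \in \Lit(\phi)$ we necessarily have $m \neq \ell$. If the atom underlying $m$ differs from that of $\ell$, then $m \notin \{\ell, \overline{\ell}\}$ and part~$a)$ of \textbf{\textup{literals}}$_\ell$ gives the biconditional, so $m$ is preserved in both directions. The only remaining possibility is $m = \overline{\ell}$, and here I use part~$b)$: inspecting the two subcases $\ell = p$ and $\ell = \overline{p}$, the contrapositive of ``$\calM', w' \vDash \ell$ implies $\calM, w \vDash \ell$'' is exactly ``$\calM, w \vDash \overline{\ell}$ implies $\calM', w' \vDash \overline{\ell}$''. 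Thus every literal occurring in $\phi$ is preserved in the required forward direction $\calM \to \calM'$, which is all the induction needs.

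The remaining cases follow the standard Hennessy--Milner argument. The boolean cases $\top, \bot, \wedge, \vee$ are immediate from the induction hypothesis, using $\Lit(\phi_1 \wedge \phi_2) = \Lit(\phi_1) \cup \Lit(\phi_2)$ and likewise for $\vee$ to inherit $\ell \notin \Lit(\phi_i)$. For $\phi = \Diamond \psi$ I pick an $R$-successor $v$ of $w$ satisfying $\psi$ and invoke \textbf{\textup{forth}} to obtain an $R'$-successor $v'$ of $w'$ with $v Z v'$, then apply the induction hypothesis; for $\phi = \Box \psi$ I take an arbitrary $R'$-successor $v'$ of $w'$ and use \textbf{\textup{back}} to find an $R$-successor $v$ of $w$ with $v Z v'$, at which $\psi$ holds by assumption, again closing via the induction hypothesis.

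For part~2, a full bisimulation $Z$ satisfies part~$a)$ for every atom and hence, in particular, part~$b)$ for every literal, so $Z$ is simultaneously an $\ell$-bisimulation and, by symmetry, so is $Z^{-1}$. Choosing any literal $\ell$ with $\ell \notin \Lit(\calG)$, which is possible since $\Prop$ is infinite, and applying part~1 in both directions yields the biconditional. The same induction extends verbatim to multiformulas $\mho$ by adding the $\spconj$- and $\spdisj$-cases, which behave exactly like $\wedge$ and $\vee$ at the level of interpretations; as the only delicate step is the literal case, which is unaffected, this extension is routine. The sole obstacle, as noted, is the bookkeeping in that literal case: ensuring that the one-directional part~$b)$ together with $\ell \notin \Lit(\phi)$ delivers forward preservation of $\overline{\ell}$.
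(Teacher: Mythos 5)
Your proposal is correct and follows essentially the same route as the paper's proof: both reduce the claim to the pointwise statement that $\calM, w \vDash \phi$ implies $\calM', w' \vDash \phi$ whenever $wZw'$ and $\ell \notin \Lit(\phi)$, proved by induction on $\phi$, with the key case $\phi = \overline{\ell}$ handled via the contrapositive of part $b)$ of \textbf{\textup{literals}}$_{\ell}$ and the modal cases via \textbf{\textup{forth}} and \textbf{\textup{back}}. Your explicit handling of part~2 (picking a fresh literal $\ell \notin \Lit(\calG)$ and applying part~1 in both directions) and of the $\spconj$/$\spdisj$ cases for multiformulas merely spells out details the paper leaves implicit.
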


\begin{definition}[BLUIP] \label{def:BLUIP}
   Logic\/ $\logic$ is said to have  the \emph{bisimulation layered-sequent  uniform interpolation property (BLUIP)}  if{f}  for every  
    literal $\ell$ and every\/ $\logic$-sequent $\calG$, there is a multiformula $A_\ell (\calG)$, called \emph{BLU interpolant}, such that:\looseness=-1
\begin{compactenum}[(i)]
\item \label{BLUIP:1} 
$\Lit\bigl(A_\ell (\cal G)\bigr) \subseteq  \Lit(\cal G) \setminus$$\{\ell\}$ and $\Lab \bigl(A_\ell (\cal G)\bigr) \subseteq \Lab (\cal G)$;
\item \label{BLUIP:2}
for each interpretation~$\calI$ of\/~$\cal G$ into an\/ $\logic$-model~$\calM$,
\[
\calM, \calI \vDash 
A_\ell (\cal G) \quad\text{implies}\quad \calM, \calI \vDash \cal G;
\]
\item \label{BLUIP:3} 
for each\/ $\logic$-model~$\calM$ and  interpretation~$\calI$ of\/~$\cal G$ into~$\calM$, if 
$
\calM, \calI \nvDash
A_\ell (\cal G)$,
then there is an\/ $\logic$-model~$\calM'$ and interpretation~$\calI'$ of\/~$\cal G$ into~$\calM'$ such that $$(\calM',\calI') \leq_\ell (\calM, \calI) \; \text{and} \;
\calM', \calI' \nvDash \cal G.$$
\end{compactenum}
\end{definition}

\begin{lemma} \label{lem: BLUIP implies UIP}
The BLUIP for\/~$\logic$ implies the  ULIP for\/~$\logic$. 
\end{lemma}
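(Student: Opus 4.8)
The plan is to manufacture the uniform Lyndon interpolant $\forall \ell \phi$ of a formula $\phi$ directly from the BLU interpolant of a \emph{one-component} $\logic$-sequent carrying $\phi$. Concretely, for a given formula $\phi$ and literal $\ell$, I would take the sequent $\calG_\phi$ whose single component is the trunk $\phi$ (for every logic except $\Sfive$), respectively the single $[\enspace]$-component $[\phi]$ for $\Sfive$, which has no trunk; in either case $\Lab(\calG_\phi)=\{\sigma\}$ is a singleton and $\Lit(\calG_\phi)=\Lit(\phi)$. The BLU interpolant $A_\ell(\calG_\phi)$ then uses only the label $\sigma$, so by Lemma~\ref{Lem: SDNF} it is equivalent to a multiformula $\bigspdisj_{j} \sigma : \chi_j$, i.e.\ to $\sigma : \theta$ with $\theta = \bigvee_j \chi_j$. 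I would then set $\forall \ell \phi \ce \theta$. Condition ULIP\eqref{ULIP:i} is immediate, since $\Lit(\theta)\subseteq\Lit\bigl(A_\ell(\calG_\phi)\bigr)\subseteq\Lit(\phi)\setminus\{\ell\}$ by BLUIP\eqref{BLUIP:1}.

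For ULIP\eqref{ULIP:ii}, I would unfold BLUIP\eqref{BLUIP:2}. Fix any $\logic$-model $\calM$ with root $\rho$ and let $\calI$ be the interpretation of $\calG_\phi$ with $\calI(\sigma)=\rho$ (for $\Sfive$, take $\calI(\sigma)$ to be an arbitrary world). If $\calM,\calI(\sigma)\vDash\theta$, then $\calM,\calI\vDash\sigma:\theta$, hence $\calM,\calI\vDash A_\ell(\calG_\phi)$, so BLUIP\eqref{BLUIP:2} gives $\calM,\calI\vDash\calG_\phi$, i.e.\ $\calM,\calI(\sigma)\vDash\phi$. Thus $\theta\to\phi$ holds at the root of every $\logic$-model (at every world for $\Sfive$), so $\vDash_\logic\theta\to\phi$ and, by completeness, $\vdash_\L\forall\ell\phi\to\phi$.

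The crux is ULIP\eqref{ULIP:iii}, where the $\ell$-bisimulation machinery earns its keep. Assume $\vdash_\L\psi\to\phi$ with $\ell\notin\Lit(\psi)$. Failure of $\vdash_\L\psi\to\theta$ would provide, via completeness, an $\logic$-model $\calM$ and interpretation $\calI$ of $\calG_\phi$ with $\calM,\calI(\sigma)\vDash\psi$ and $\calM,\calI(\sigma)\nvDash\theta$, that is $\calM,\calI\nvDash A_\ell(\calG_\phi)$. I would feed this into BLUIP\eqref{BLUIP:3} to obtain a model $\calM'$ and interpretation $\calI'$ of $\calG_\phi$ with $(\calM',\calI')\leq_\ell(\calM,\calI)$ and $\calM',\calI'\nvDash\calG_\phi$, i.e.\ $\calM',\calI'(\sigma)\nvDash\phi$. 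Now I would transport the truth of $\psi$ along the $\ell$-bisimulation: applying Lemma~\ref{lem_modal_equivalence}(1) to the one-component sequent $\calG_\psi$ (trunk $\psi$, resp.\ $[\psi]$), whose literals are exactly those of $\psi$ so that $\ell\notin\Lit(\calG_\psi)$, yields $\calM',\calI'\vDash\calG_\psi$, i.e.\ $\calM',\calI'(\sigma)\vDash\psi$. Since $\vdash_\L\psi\to\phi$ forces $\psi\to\phi$ at $\calI'(\sigma)$, we get $\calM',\calI'(\sigma)\vDash\phi$, contradicting the previous line. Hence $\vDash_\logic\psi\to\theta$ and $\vdash_\L\psi\to\forall\ell\phi$ by completeness.

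The only genuinely delicate point is the direction and asymmetry of the $\ell$-bisimulation in this last step: BLUIP\eqref{BLUIP:3} returns a model that is bisimilar \emph{in the $\leq_\ell$ direction}, and Lemma~\ref{lem_modal_equivalence}(1) preserves truth of $\ell$-free sequents exactly in that direction. The hypothesis $\ell\notin\Lit(\psi)$ of ULIP\eqref{ULIP:iii} is consumed precisely to license the transfer of $\psi$, while the interpolant $\theta$ is still permitted to contain occurrences of $\overline{\ell}$, as required by ULIP\eqref{ULIP:i}. A secondary bookkeeping matter is the $\Sfive$ case, where $\calG_\phi$ must be a crown component rather than a trunk and where root-validity coincides with validity at every world; this is covered by the parenthetical adjustments above.
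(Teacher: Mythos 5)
Your proposal is correct and follows essentially the same route as the paper's proof: defining $\forall\ell\phi$ from the BLU interpolant of the one-component sequent, using BLUIP\eqref{BLUIP:2} with the root-as-interpretation for condition ULIP\eqref{ULIP:ii}, and combining BLUIP\eqref{BLUIP:3} with Lemma~\ref{lem_modal_equivalence}(1) to transfer the $\ell$-free premise $\psi$ for ULIP\eqref{ULIP:iii}. Your explicit flattening of the single-label multiformula to a plain formula via Lemma~\ref{Lem: SDNF}, and the $[\phi]$-component adjustment for $\Sfive$, are just careful elaborations of details the paper leaves implicit.
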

\begin{proof}
     Let~$\forall \ell \phi = A_\ell (\phi)$. We prove the properties of Def.~\ref{Dfn: ULIP}.  
     Variable property is immediate.
     For Property \eqref{ULIP:ii}, assume~$\nvdash_\L 
     A_\ell(\phi) \rightarrow \phi$. By completeness, we have~$\calM, \rho \vDash 
     A_\ell(\phi)$ and~$\calM,\rho \nvDash \phi$ for some $\L$-model~$\calM$ with root $\rho$. As~$\rho$ is the root, it can be considered as an interpretation by Def.~\ref{Dfn:  interpretation}. By condition~\eqref{BLUIP:2} from Def.~\ref{def:BLUIP} we get a contradiction. For~\eqref{ULIP:iii}, let $\psi$ be a formula such that~$\ell \notin \Lit(\psi)$ and suppose~$\nvdash_\L \psi \rightarrow 
     A_\ell(\phi)$. So there is an $\L$-model~$\calM$ with root~$\rho$ such that $\calM, \rho \vDash \psi$ and~$\calM, \rho \nvDash 
     A_\ell(\phi)$. Again, $\rho$
     is treated as an interpretation, and by~\eqref{BLUIP:3} from Def.~\ref{def:BLUIP}, there is an $\L$-model~$\calM'$ with root~$\rho'$ such that $(\calM',\rho') 
     \leq_\ell (\calM,\rho)$ and $\calM',\rho' \nvDash \phi$.  By Lemma~\ref{lem_modal_equivalence}, $\calM', \rho' \vDash \psi$, hence $\nvdash_\L \psi \rightarrow \phi$ as desired.\looseness=-1 \qed
\end{proof}

To show that  calculus $\layKfive$   enjoys the BLUIP for $\Kfive$, we need two important ingredients: some model modifications that are closed under bisimulation and an algorithm to compute uniform Lyndon interpolants.
\begin{definition}[Copying]
\label{def:copying}
    Let $\calM=(W,R,V)$ be a\/ $\Kfive$-model with root $\rho$ and cluster $C$. Model $\calN' = (W \sqcup \{w_c \},R',V')$ is obtained by \emph{copying  $w\in C$} if{f} $R' = R \sqcup (\{w_c\}\times C) \sqcup (C \times \{w_c\}) \sqcup \{(\rho,w_c) \mid (\rho,w) \in R \} \sqcup \{(w_c,w_c) \}$, and $V'(p)= V(p) \sqcup \{w_c \mid w \in V(p) \}$ for any $p \in \Prop$. Model $\calN'' = (W \sqcup \{w_c \},R'',V')$ is obtained by \emph{copying~$w$ away from the root} if{f} $R'' = R' \setminus \{(\rho, w_c)\}$.
\end{definition}

\begin{lemma}\label{lem:copying}
    Let model $\calN$ be obtained by copying a world $w$ from a\/ $\Kfive$-model~$\calM$ (away from the root). Let $\calI \colon X \to \calM$ and $\calI'\colon X \to \calN$ be interpretations  such that for each $x \in X$, either $\calI(x) = \calI'(x)$ or $\calI(x)=w$ while $\calI'(x)= w_c$. Then, $\calN$ is a\/ $\Kfive$-model and $(\calM,\calI) \sim (\calN,\calI')$.
\end{lemma}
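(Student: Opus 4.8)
The plan is to confirm first that $\calN$ (whether $\calN'$ or $\calN''$) is again a $\Kfive$-model, and then to exhibit one explicit bisimulation that simultaneously witnesses $(\calM,\calI)\sim(\calN,\calI')$ for both variants. Write $\calM=(W,R,V)$ with $W=\{\rho\}\sqcup C$ and $w\in C$. For the frame condition I would set $C'=C\cup\{w_c\}$ and check directly from the definition of $R'$ that $C'\times C'\subseteq R'$: the old $C\times C$ survives, and the added edges $\{w_c\}\times C$, $C\times\{w_c\}$, and $(w_c,w_c)$ supply exactly the pairs involving $w_c$. Since $\rho$ stays irreflexive and $W\sqcup\{w_c\}=\{\rho\}\sqcup C'$, this is a $\Kfive$-frame; passing to $\calN''$ only deletes $(\rho,w_c)$, which touches neither the cluster $C'$ nor the irreflexivity of $\rho$, so $\calN''$ is a $\Kfive$-model as well.

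For the bisimulation I would take
\[
Z = \{(v, v) \mid v \in W\} \cup \{(w, w_c)\} \subseteq W \times (W \sqcup \{w_c\}),
\]
relating every world to itself and, in addition, $w$ to the fresh world $w_c$. The atom clause is immediate from $V'(p)=V(p)\sqcup\{w_c\mid w\in V(p)\}$: identity pairs preserve membership in $V(p)$ since $v\neq w_c$, and $w_c\in V'(p)$ holds exactly when $w\in V(p)$, matching the pair $(w,w_c)$. The interpretations are $Z$-related by hypothesis, because for each $x\in X$ either $\calI(x)=\calI'(x)$ (an identity pair) or $\calI(x)=w$ while $\calI'(x)=w_c$ (the extra pair), so $Z$ indeed witnesses $(\calM,\calI)\sim(\calN,\calI')$ once \textbf{forth} and \textbf{back} are verified.

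The heart of the argument is the structural fact that in a $\Kfive$-frame no cluster world sees the root: if $v\in C$ had $v R\rho$, applying the Euclidean property to $vR\rho$ together with itself forces $\rho R\rho$, contradicting irreflexivity of $\rho$. Hence the $R$-successors of any $v\in C$ are exactly $C$. With this in hand, \textbf{forth} is routine: identity pairs use $R\subseteq R'$ (resp.\ $R\subseteq R''$), and from $(w,w_c)$ a successor $u$ of $w$ lies in $C$, so $w_c R' u$ by $\{w_c\}\times C\subseteq R'$ and we match $u$ by itself. For \textbf{back} the only delicate target is $w_c$: if $vR'w_c$ with $v\in C$ then $vRw$ in $\calM$, and we match $w_c$ by $w$ through the pair $(w,w_c)$; the edge $(\rho,w_c)$ occurs only in $\calN'$ and is matched by the original $\rho R w$.

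I expect the main obstacle to be handling the two variants uniformly rather than any single case. The point to isolate is that deleting $(\rho,w_c)$ in $\calN''$ is harmless: $\rho$'s successors in $\calN''$ coincide with its $R$-successors, so from the pair $(\rho,\rho)$ the world $w_c$ never arises as a $\calN''$-successor in the \textbf{back} direction, and $w$ (still seen by $\rho$ exactly as before) takes care of \textbf{forth}. Once the cluster-successor fact is established, the remaining checks are a finite case analysis, and the single relation $Z$ serves both $\calN'$ and $\calN''$.
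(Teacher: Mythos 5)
The paper states Lemma~6 without proof (it is not in the appendix either), treating it as routine, so there is no official argument to diverge from; your proposal is correct and is exactly the expected one. Your frame check, the relation $Z=\{(v,v)\mid v\in W\}\cup\{(w,w_c)\}$, and the key structural observation that Euclideanness plus irreflexivity of $\rho$ forbids cluster-to-root edges (so the successors of any $v\in C$ are exactly $C$) all go through, and the case split between $\calN'$ and $\calN''$ is handled correctly, since $(\rho,w_c)\in R'$ exactly when $\rho R w$. The only sub-case you leave inside the closing ``finite case analysis'' is \textbf{back} from the pair $(w,w_c)$ with target $w_c$ itself via the loop $(w_c,w_c)$, which is matched by the pair $(w,w_c)$ using $w R w$ from cluster totality --- worth one explicit line, but no gap.
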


In the construction of  interpolants, we use the following rules $\ruledtrunk$ and $\doubledrule$ and sets $\calG_c$ and $\Box \Diamond \calG_c$ of formulas from the crown of $\calG$:
\begin{center}
    $\calG_c = \{ \phi \mid \sigma : \phi \in \calG, \sigma \neq \bullet\} \qquad
    \Box \Diamond \calG_c = \{ \Box \phi \mid \Box \phi \in \calG_c\} \sqcup \{ \Diamond \phi \mid \Diamond \phi \in \calG_c\}$
\end{center}

\begin{center}
$\vliinf{\ruledtrunk}{}{\Gamma}{\Gamma, \bigl[\{\psi \mid \Diamond \psi \in \Gamma \}\bigr]}{\Gamma, \Diamond \top} \ \ \text{ and } \ \ 
\vlinf{\doubledrule}{}{\mathcal{G}}{\mathcal{G}, \bigl[\{\psi \mid \Diamond \psi \in \calG \}\bigr], \bigl[\bigl[ \{ \chi \mid \Diamond \chi \in \calGcrown\}\bigr]\bigr]}$
\end{center}
Rule $\ruledtrunk$ shows similarities with rule $\ruled_t$ from logics $\KDfive$ and $\KDfourfive$, but is only applied in the absence of the crown. Rule $\ruledtrunk$ is sound for $\Kfive$ because it can be viewed as a composition of an (admissible) cut on $\Box \bot$ and~$\Diamond \top$ in the trunk, followed by $\Box_t$ in the left premise on $\Box \bot$ that creates the first crown  component (though  $\bot$ is dropped from it), which is populated using several $\Diamond_t$-rules for $\Diamond \psi \in \Gamma$. The label of this  crown component is always $\bullet 1$. 
Rule $\doubledrule$ provides extra information in the calculation of the uniform interpolant and is needed primarily for technical reasons.  We highlight the two new sequent components created by the last instance of $\doubledrule$ using special placeholder labels $\bullet \ruled$ and $\ruled$ for the respective brackets. These labels are purely for  readability purposes and revert to the standard  $\bullet j$ and $k$ labels after the next instance of $\doubledrule$. 

\begin{table}[t]
  \begin{center}
  \renewcommand*{\arraystretch}{1.4}
    \begin{tabular}{|l  l l |} \hline
     & $\calG$~matches \quad \quad \quad 
    & 
    $A_\ell(t, \Sigma_c; \calG)$~equals
    \\
    \hline
    1. & $\calG'\vlfill{\top}_\sigma$ 
    &
    $\sigma : \top$ 
    \\
    2. & $\calG'\vlfill{q,\overline{q}}_\sigma$ 
    &
    $\sigma : \top$ 
    \\
    3. & $\calG'\vlfill{\varphi \lor \psi}$ 
    &
    $A_\ell\bigl(t, \Sigma_c; \calG'\vlfill{\varphi \lor \psi, \varphi, \psi}\bigr)$
    \\
    4. & $\calG'\vlfill{\varphi \land \psi}$ 
    &
    $A_\ell\bigl(t, \Sigma_c; \calG'\vlfill{\varphi \land \psi, \varphi}\bigr) \spconj A_\ell\bigl(t, \Sigma_c; \calG'\vlfill{\varphi \land \psi, \psi}\bigr)$
    \\
    5. & $\calG', \Box \varphi$
    &
    $\bigspconj_{i=1}^h \left(\bullet : \Box \delta_i \spdisj \bigspdisj_{\tau \in \calG} \tau : \gamma_{i,\tau} \right)$
    \\
    & & where $j$~is the smallest integer such that $\bullet j \notin \calG$ and
    the SCNF  \\
    & & of  $A_\ell\bigl(t, \Sigma_c; \calG', \Box \varphi, [\varphi]_{\bullet j}\bigr)$ is $\bigspconj_{i=1}^h \left(\bullet j : \delta_i  \spdisj \bigspdisj_{\tau \in \calG} \tau : \gamma_{i,\tau} \right)$,
    \\
    6. & $\calG', \llbracket \Sigma , \Box \varphi \rrbracket_\sigma$
    &
    $\bigspconj_{i=1}^h \left(\sigma : \Box \delta_i \spdisj \bigspdisj_{\tau \in \calG} \tau : \gamma_{i,\tau} \right)$
    \\
    & & where $j$~is the smallest integer such that $j \notin \calG$ and
    the SCNF  \\
    & & of  $A_\ell\bigl(t, \Sigma_c; \calG'$, $\llbracket \Sigma , \Box \varphi \rrbracket_\sigma, [[\phi]]_j \bigr)$ is $\bigspconj_{i=1}^h \left(j : \delta_i  \spdisj \bigspdisj_{\tau \in \calG} \tau : \gamma_{i,\tau} \right)$,
    \\
    7. & $\calG', \Diamond \varphi,[\Sigma]$
    &
    $A_\ell\bigl(t, \Sigma_c; \calG', \Diamond \phi, [\Sigma, \phi] \bigr)$
    \\
    8. & $\calG', \llbracket \Sigma, \Diamond \phi \rrbracket$
    &
    $A_\ell\bigl(t, \Sigma_c; \calG', \llbracket \Sigma, \Diamond \phi, \phi \rrbracket\bigr)$
    \\
    9. & $\calG',\llbracket \Sigma, \Diamond \phi \rrbracket, \llparenthesis \Pi \rrparenthesis$
    &
    $A_\ell\bigl(t, \Sigma_c; \calG',\llbracket \Sigma, \Diamond \phi \rrbracket, \llparenthesis \Pi , \phi \rrparenthesis \bigr)$
    \\
    \hline
    \end{tabular}
    \end{center}
\caption{Recursive construction of~$A_\ell(t, \Sigma_c; \calG)$ for~$\calG$ that are not $\Kfive$-saturated. 
}
\label{table:Ap}
\end{table}

To compute a uniform Lyndon interpolant $\forall \ell \xi$ for a formula $\xi$, we first  compute a BLU interpolant  $A_\ell(0,\varnothing ; \xi_\bullet)$ by using the recursive function $A_\ell(t,\Sigma_c ; \calG)$ with three parameters we present below. The main parameter is a $\Kfive$-sequent~$\calG$, while the other two parameters are auxiliary:  $t \in \{0,1\}$ is a boolean variable such that $t=1$ guarantees that rule~$\doubledrule$ has been applied at least once for the case when $\calG$ contains  diamond formulas;
$\Sigma_c \subseteq \Box \Diamond \calG_c$ is a set of modal formulas that provides a bookkeeping strategy to prevent redundant applications of rule~$\doubledrule$. \looseness=-1

To calculate $A_\ell(t,\Sigma_c ; \calG)$ our algorithm makes a choice of which row from Table \ref{table:Ap} to apply by  trying each of the following steps in the specified order: 
\begin{compactenum}
    \item\label{step_initial}
    If possible, apply rows 1--2, i.e., stop and return $A_\ell(t, \Sigma_c;\calG) =\sigma : \top$.
    \item\label{step_prop}
    If 
    some formula $\varphi \lor \psi $  (resp.~$\varphi \land \psi$) from $\calG$ is not saturated, compute $A_\ell(t, \Sigma_c;\calG)$ according to row~3~(resp.~4)  applied to this formula.
    \item\label{step_modal}
    If some formula $\Box \varphi\in\calG$ is not saturated (resp. $\Diamond \varphi\in\calG$ is not saturated w.r.t.~$\sigma \in \calG$), compute $A_\ell(t, \Sigma_c;\calG)$ according to the unique respective  row among 5--9  applicable to  this formula (w.r.t.~$\sigma$).
    \item\label{step_drule}
    If Steps \ref{step_initial}--\ref{step_modal} do not apply, i.e., $\calG$ is saturated, proceed as follows:
    \begin{compactenum}
 \item\label{step_saturated_base}
    if $\calG$ has  no $\Diamond$-formulas, stop and return  $A_\ell(t, \Sigma_c;\calG) = \LitDis_\ell(\calG)$ where
    \begin{equation}\label{Ap_saturated_base}
            \LitDis_\ell(\calG) = 
            \bigspdisj_{\sigma :  \ell' \in \calG,  \ell' \in \Lit \setminus \{ \ell\}} \sigma : \ell'
        \end{equation}
            
        \item\label{step_saturated_dtrunk} 
        else, if $\calG=\Gamma$ consists of the trunk only, apply rule $\ruledtrunk$ as follows:
        \begin{multline}\label{Ap_saturated_dtrunk}
            \vspace{-10pt}   A_\ell(t, \Sigma_c;\Gamma) =\\ 
            \Bigl(\bullet : \Box \bot \spdisj  
            \bigspdisj_{i=1}^{h} \big( \bullet : \Diamond \delta_i  \spconj  \bullet : \gamma_{i} \big) 
            \Bigr) 
            \spconj 
            \bigl(\bullet : \Diamond \top \spdisj \LitDis_\ell(\Gamma)  \bigr)
        \end{multline}      
where the SDNF of $A_\ell\Bigl(0, \Sigma_c ;\quad \Gamma, \bigl[\{\psi \mid \Diamond \psi \in \Gamma \}\bigr]_{ \bullet 1}\Bigr)$ is
        \begin{equation}\label{Ap_saturated_SDNFdtrunk}
            \bigspdisj_{i=1}^{h} \Bigl( \bullet 1
            : \delta_i \spconj \bullet : \gamma_{i} \Bigr)
        \end{equation}
        \item\label{step_saturated_insufficeintbase}
        else, if $t=1$ and $\Box \Diamond \calG_c \subseteq \Sigma_c$, stop and return $A_\ell(t, \Sigma_c;\calG) = \LitDis_\ell(\calG)$. 
        \item\label{step_saturated_doubled}
        else, apply the rule $\doubledrule$ as follows (where w.l.o.g.~$\bullet 1 \in \calG)$: 
        \begin{equation}\label{Ap_saturated_doubled}
            A_\ell(t, \Sigma_c;\calG) = \bigspdisj_{i=1}^{h} \left(\bullet : \Diamond \delta_i  \spconj \bullet 1 : \Diamond \delta'_i \spconj \bigspconj_{ \tau \in \calG } \tau : \gamma_{i,\tau} \right) 
        \end{equation}
        where SDNF of $A_\ell\Bigl(1, \Box \Diamond \calG_c;\,\, \calG, \bigl[\{\psi \mid \Diamond \psi \in \calG \}\bigr]_{\bullet \ruled},\: \bigl[\bigl[ \{ \chi \mid \Diamond \chi \in \calGcrown \} \bigr]\bigr]_{\ruled}\Bigr)$ is\looseness=-1
        \begin{equation}\label{Ap_saturated_SDNFdoubled}
            \bigspdisj_{i=1}^{h} \left(\bullet \ruled: \delta_i  \spconj \ruled : \delta'_i \spconj \bigspconj_{ \tau \in \calG} \tau: \gamma_{i,\tau} \right)
        \end{equation}
    \end{compactenum}   
\end{compactenum}

The computation of the algorithm can be seen as a  proof search tree (extended with rules $\ruledtrunk$ and $\doubledrule$).  
In this proof search, call
$A_\ell(t,\Sigma_c ; \calG)$ is \textit{sufficient} (to be a BLU interpolant for~$\calG$) if each branch going up  from it either stops in Steps~\ref{step_initial}~or~\ref{step_saturated_base} or continues via Steps~\ref{step_saturated_dtrunk} or \ref{step_saturated_doubled}. Otherwise, it is \textit{insufficient}, if one of the branches stops in Step~\ref{step_saturated_insufficeintbase}, say, calculating $A_\ell(1,\Sigma_c ; \calH)$. In this  case,  $A_\ell(1,\Sigma_c ; \calH)$ is not generally a BLU interpolant for $\calH$, but these leaves provide enough information to find a BLU interpolant from some sequent down the proof search tree.\looseness=-1
\begin{example}
    Consider the layered sequent $\calG= \phi$ for $\phi =\overline{p} \vee \Diamond \Diamond (p \vee q)$.  We show how to construct $A_\ell(0, \emptyset;\phi)$ for $\ell=p$.  First, we compute the proof search tree decorated with  $(t, \Sigma_c)$ to the left of each line, according to the algorithm, using the following abbreviations  $\Gamma = \phi, \overline{p}, \Diamond \Diamond (p \vee q)$ and $\Sigma_1 = \Diamond (p \vee q), p \vee q, p, q$:
 %   \vspace{-2.5ex}
\begin{center}
 \adjustbox{max width=\textwidth}{
    \AxiomC{$(1, \{\Diamond (p \vee q)\}) \ \ \  
    {\Gamma, [\Sigma_1]_{\bullet1}, [\Diamond(p \vee q), p \vee q, p, q]_{\bullet \ruled}, [[p \vee q, p, q]]_{\ruled}}$} \RightLabel{\scriptsize{$\vee$} }
    \UnaryInfC{$(1, \{\Diamond (p \vee q)\}) \ \ \  
    \Gamma, [\Sigma_1]_{\bullet1}, [\Diamond(p \vee q), p \vee q]_{\bullet \ruled}, [[p \vee q]]_{\ruled}$} \RightLabel{\scriptsize{$\doubledrule$} } 
    \UnaryInfC{$(0, \emptyset) \ \ \ \Gamma, [\Diamond (p \vee q), p \vee q, p, q]_{\bullet 1}$} \RightLabel{\scriptsize{$\vee$} }
    \UnaryInfC{$(0, \emptyset) \ \ \  \Gamma, [\Diamond (p \vee q), p \vee q]_{\bullet 1}$} \RightLabel{\scriptsize{$\rulet$} } \UnaryInfC{$(0, \emptyset) \ \ \  \Gamma, [\Diamond (p \vee q)]_{\bullet 1}$} \AxiomC{$\Gamma, \Diamond \top$} \RightLabel{\scriptsize{$\ruledtrunk$} } \BinaryInfC{$(0, \emptyset) \ \ \   \phi, \overline{p}, \Diamond \Diamond (p \vee q)$}
\RightLabel{\scriptsize{$\vee$} } \UnaryInfC{$(0, \emptyset) \ \ \   \overline{p} \vee \Diamond \Diamond (p \vee q)$}
    \DisplayProof }
    \end{center}
$\calH=\phi, \overline{p}, \Diamond \Diamond (p \vee q), [\Diamond (p \vee q), p \vee q, p, q]_{\bullet1}, [\Diamond(p \vee q), p \vee q, p, q]_{\bullet \ruled}, [[p \vee q, p, q]]_{\ruled}$ in the left leaf is a saturated sequent with $\Diamond$-formulas, crown components, $t=1$, and $\Box \Diamond\calH_c = \{\Diamond (p \vee q)\} \subseteq \{\Diamond (p \vee q)\}=\Sigma_c$. Hence, by Step~\ref{step_saturated_insufficeintbase},
\begin{equation}
\label{eq:ex:topleft}
A_p(1, \{\Diamond (p \vee q)\} ; \calH)\quad=\quad\bullet: \overline{p} \,\spdisj\, \bullet 1:q \,\spdisj\, \bullet \ruled:q \,\spdisj\ \ruled:q.
\end{equation}
Applications of rule $\vee$ do not change the interpolant (Step~\ref{step_prop}, row 3). To compute  $A_p(0,\varnothing; \Gamma, [\Sigma_1]_{\bullet1})$ for the conclusion of $\doubledrule$, we  convert~\eqref{eq:ex:topleft} into an SDNF
\small\[
\Bigl(\bullet: \overline{p} \spconj \bigspconj_{\sigma \in \{\bullet 1, \bullet \ruled, \ruled\}} \sigma : \top\Bigr) \spdisj \bigspdisj_{\tau \in \{\bullet 1, \bullet \ruled, \ruled\}}\Bigl(\tau: q \spconj \bigspconj_{\sigma \in \{ \bullet, \bullet 1, \bullet \ruled, \ruled\} \setminus \{ \tau \}} \sigma : \top\Bigr).
\]
\normalsize Now, by Step \ref{step_saturated_doubled}, and converting into a new SDNF, we get $A_p(0, \emptyset ; \Gamma, [\Sigma_1]_{\bullet1})\equiv{}$
\begin{align*}   
\big(\bullet: (\overline{p} \wedge \Diamond \top) \spconj \bullet 1: (\top \wedge \Diamond \top)\big)
\spdisj
 \big(\bullet: (\top \wedge \Diamond \top)  \spconj \bullet 1: (q \wedge \Diamond \top)\big)
\spdisj 
\\
\big(\bullet: (\top \wedge \Diamond q)  \spconj \bullet 1: (\top \wedge \Diamond \top)\big) 
\spdisj
\big(\bullet: (\top \wedge \Diamond \top)  \spconj \bullet 1: (\top \wedge \Diamond q)\big)\rlap{.}\phantom{\spdisj}
\end{align*}
\normalsize Further applications of $\vee$ and $\rulet$ keep this interpolant intact. Note that the application of $\ruled'_t$ does not require to continue proof search for the right branch. Instead, Step~\ref{step_saturated_dtrunk} prescribes that $A_p(0, \emptyset ; \phi,  \overline{p}, \Diamond \Diamond (p \vee q))\equiv \bigl(\bullet : \overline{p} \spdisj \bullet : \Diamond \top\bigr) \spconj$ 
\begin{align*}   
\Bigl(\big(\bullet: (\overline{p} \wedge \Diamond \top \wedge \Diamond(\top \wedge \Diamond \top))\big)
\spdisj
 \big(\bullet: (\top \wedge \Diamond \top \wedge \Diamond (q \wedge \Diamond \top))\big)
\spdisj \phantom{\bullet : \Box \bot\Bigr)}
\\
\phantom{\Bigl(}\big(\bullet: (\top \wedge \Diamond q \wedge\Diamond(\top \wedge \Diamond \top))\big) 
\spdisj
\big(\bullet: (\top \wedge \Diamond \top \wedge\Diamond(\top \wedge \Diamond q))\big)\spdisj \bullet : \Box \bot\Bigr)\rlap{.} 
\end{align*}
\normalsize Simplifying, we 
finally 
obtain 
\begin{equation}
\label{eq:ex:finalint}
A_p(0, \emptyset ; \phi) \equiv\bullet : \Bigl((\overline{p} \vee \Diamond \top) \wedge \bigl((\overline{p} \wedge \Diamond \top) \vee \Diamond q \vee \Diamond \Diamond q \vee \Box \bot\bigr)   \Bigr) \equiv
\bullet : (\overline{p} \vee \Diamond \Diamond q) .
\end{equation}
To check that $\overline{p} \vee \Diamond \Diamond q$ is  a uniform Lyndon interpolant for $\phi$ w.r.t.~literal~$p$, it is sufficient to verify that~\eqref{eq:ex:finalint} is a BLU interpolant for $\calG$ by checking the conditions in Def.~\ref{def:BLUIP}. We only check BLUIP\eqref{BLUIP:3} as the least trivial. If $\calM,\calI \nvDash \bullet : (\overline{p} \vee \Diamond \Diamond q)$ for an interpretation $\calI$ into a $\Kfive$-model $\calM=(W,R,V)$, then, by Defs.~\ref{def:multiformula} and~\ref{Dfn:  interpretation},  $\calM, \rho \nvDash \overline{p} \vee \Diamond \Diamond q$ for the root $\rho$ of~$\calM$. For $\ell=p$, we have an $\ell$-bisimulation 
$(\calM',\calI) \leq_\ell (\calM,\calI)$ for $\calM'=(W,R,V')$ with $V'(p) = \{\rho\}$ and  $V'(r) = V(r)$ for $r\ne p$  since $\mathbf{literals}_p$ allows to turn $p$ from true to false. It is easy to see that $\calM',\rho \nvDash \overline{p} \vee \Diamond \Diamond (p \vee q)$. Thus, $\calM', \calI \nvDash \bullet : \phi$.
\end{example}

The following properties of the algorithm are proved in the Appendix.

\begin{restatable}{lemma}{Aptechnical}
\label{lem:Ap_properties}
All recursive calls $A_\ell(t,\Sigma_c; \calG)$ in a proof search tree of $A_\ell(0,\varnothing; \phi)$ have the following properties:
    \begin{compactenum}
        \item\label{thm_item:terminating}
        The algorithm is terminating.
        \item\label{thm_item:dtrunk}
        When Step~\ref{step_saturated_dtrunk} is applied, $t=0$ and every branch going up from it consists of Steps~\ref{step_prop}--\ref{step_modal} followed by either final Step~\ref{step_initial} or continuation via Step~\ref{step_saturated_doubled}.
        \item\label{thm_item:doubled}
        After Step~\ref{step_saturated_doubled} is applied, every branch going up from it consists of  Steps~\ref{step_prop} followed by a call  $A_\ell(1,\Box \Diamond \calG_c;\calG, [\Theta]_{\bullet\ruled}, [[\Phi]]_{\ruled})$ of one of the following types:
        \begin{compactenum}
            \item 
            sufficient and final when calculated via Step~\ref{step_initial};
            \item 
            sufficient and propositionally saturated when calculated via Step~\ref{step_modal}, with every branch going up from there consisting of more \mbox{Steps~\ref{step_prop}--\ref{step_modal}}, followed by either final Step~\ref{step_initial} or continuation via Step~\ref{step_saturated_doubled};\looseness=-1
            \item \label{case:hard}
            insufficient and saturated when calculated via Step~\ref{step_saturated_insufficeintbase}.
        \end{compactenum}
\end{compactenum}
\end{restatable}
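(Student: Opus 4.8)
The plan is to prove all three items of Lemma~\ref{lem:Ap_properties} by tracking the behaviour of the recursive algorithm through the proof-search tree it induces, using the two auxiliary parameters $t$ and $\Sigma_c$ as the key bookkeeping devices. The central idea is that the pair $(t,\Sigma_c)$ encodes exactly where we are in the controlled use of the rule $\doubledrule$, and the termination argument must show that $\doubledrule$ cannot be applied unboundedly.

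For item~\eqref{thm_item:terminating}, termination, I would argue by defining a well-founded measure that strictly decreases at each recursive call. The ordinary propositional and modal steps (Steps~\ref{step_prop}--\ref{step_modal}) are handled essentially by the termination of saturation-based proof search (Theorem~\ref{thm:termination_proof_search}): they only add subformulas, and the number of labelled subformulas that can be created is polynomially bounded. The genuinely new ingredient compared to ordinary proof search is $\doubledrule$, which creates two fresh crown components and could in principle be iterated. The critical observation is that $\doubledrule$ is only applied in Step~\ref{step_saturated_doubled}, and immediately before applying it we set $t=1$ and replace $\Sigma_c$ by $\Box\Diamond\calG_c$. Since the set $\Box\Diamond\calG_c$ is drawn from the finitely many modal subformulas of the original $\phi$ and saturation can only add such formulas already present as subformulas, the inclusion $\Box\Diamond\calG_c\subseteq\Sigma_c$ required to trigger Step~\ref{step_saturated_insufficeintbase} must eventually hold. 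I would make this precise by showing that along any branch the set $\Sigma_c$ is monotone (once $t=1$) and bounded by the modal subformulas of the root sequent, so that a second qualifying saturated sequent forces the guard in Step~\ref{step_saturated_insufficeintbase} and halts the branch; combining this with the polynomial bound on non-$\doubledrule$ steps gives a finite tree.

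For item~\eqref{thm_item:dtrunk}, I would simply read off the control flow: Step~\ref{step_saturated_dtrunk} fires only when $\calG=\Gamma$ is the trunk alone, which can only occur at the very start (or after no crown has yet been created), so $t=0$ there because $t$ is only raised to $1$ inside Step~\ref{step_saturated_doubled}, which requires a crown. The recursive call in~\eqref{Ap_saturated_SDNFdtrunk} starts afresh with $t=0$ and a single crown component $[\ldots]_{\bullet 1}$; from that sequent onwards no trunk-only configuration recurs, so Step~\ref{step_saturated_dtrunk} cannot apply again, and the only saturated exit with $\Diamond$-formulas present is Step~\ref{step_saturated_doubled} (Step~\ref{step_saturated_insufficeintbase} is blocked since $t=0$). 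Item~\eqref{thm_item:doubled} is proved analogously by inspecting the recursive call spawned by Step~\ref{step_saturated_doubled}: it is launched with $t=1$ and $\Sigma_c=\Box\Diamond\calG_c$, and after propositional saturation (Step~\ref{step_prop}) one reaches either the axiomatic exit (case~(a)), a modal step yielding a propositionally saturated sequent (case~(b)), or a saturated sequent which, because $t=1$ and the guard $\Box\Diamond\calG_c\subseteq\Sigma_c$ now holds by construction, exits via Step~\ref{step_saturated_insufficeintbase} (case~(c), the insufficient leaf).

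I expect the main obstacle to be item~\eqref{thm_item:terminating}, specifically verifying that the interaction between saturation and the fresh labels $\bullet\ruled,\ruled$ introduced by $\doubledrule$ does not reopen already-saturated formulas in a way that breaks the polynomial bound. One must check carefully that the two components created by $\doubledrule$ only contain formulas $\psi$ (resp.~$\chi$) that are strict subformulas of existing $\Diamond$-formulas, so no new subformulas outside the original subformula set appear, and that the placeholder labels revert cleanly so the label count stays bounded. The bookkeeping set $\Sigma_c$ is precisely what guarantees that a $\doubledrule$ application is never redundant on a given branch, and making the monotonicity-plus-boundedness argument for $\Sigma_c$ airtight is the technical heart of the proof.
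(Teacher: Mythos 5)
Your items~\eqref{thm_item:terminating} and~\eqref{thm_item:dtrunk} follow essentially the paper's route: termination of Steps~\ref{step_prop}--\ref{step_modal} via Theorem~\ref{thm:termination_proof_search}, and for $\doubledrule$ the observation that along a branch the annotations $\Sigma_c$ form a non-decreasing sequence, bounded by the modal subformulas of the root, which strictly grows at each application of Step~\ref{step_saturated_doubled} (since that step fires only when $\Box\Diamond\calG_c \not\subseteq \Sigma_c$). One phrase is off, though: it is not true that ``a second qualifying saturated sequent forces the guard in Step~\ref{step_saturated_insufficeintbase}'' --- saturation between two applications of $\doubledrule$ can put genuinely new modal formulas into the crown, so the guard can fail repeatedly and Step~\ref{step_saturated_doubled} can recur several times; only the strict-growth-plus-boundedness argument (which you also state, and which is the paper's) closes the branch.

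The genuine gap is in item~\eqref{thm_item:doubled}, case~(c): you claim that after Step~\ref{step_saturated_doubled} the guard $\Box\Diamond\calG_c \subseteq \Sigma_c$ ``holds by construction'', so a saturated call exits via Step~\ref{step_saturated_insufficeintbase}. This is false in general, and it conflates the two cases that the paper's proof carefully separates. Propositional saturation of the fresh components $[\Theta]_{\bullet\ruled}$ and $[[\Phi]]_{\ruled}$ can expose modal formulas \emph{not} in $\Box\Diamond\calG_c$: e.g., if $\Diamond(\Box q \wedge \Box r) \in \calG_c$, then $\Box q \wedge \Box r \in \Phi$, and the $\wedge$-branch that adds $\ruled : \Box r$ may introduce a boxed formula absent from $\calG_c$ (saturation of $\calG$ only guarantees \emph{one} conjunct per crown component). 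The correct argument is a dichotomy on whether $\Box\Diamond(\calG,[\Theta],[[\Phi]])_c \subseteq \Box\Diamond\calG_c$. If not, the call is processed by Step~\ref{step_modal} --- your case~(b) --- and its sufficiency still needs an argument you omit: since $\Sigma_c$ stays frozen at $\Box\Diamond\calG_c$ through Steps~\ref{step_prop}--\ref{step_modal} while the crown now contains formulas outside it, Step~\ref{step_saturated_insufficeintbase} remains blocked at every saturated sequent further up, so each branch ends in Step~\ref{step_initial} or continues via Step~\ref{step_saturated_doubled}. If the inclusion does hold, one must verify that the propositionally saturated sequent is in fact \emph{fully} saturated before Step~\ref{step_saturated_insufficeintbase} can be invoked at all: this uses that every member of $\Theta \cup \Phi$ already occurs in $\calG_c$ (by saturation of $\calG$, since $\bullet 1 \in \calG$), so all box- and diamond-formulas in the new components inherit their saturation witnesses from $\calG$. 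Without this dichotomy and the full-saturation check, the trichotomy (a)/(b)/(c) --- in particular the sufficiency of case~(b) and the insufficiency of case~(c) --- is not established, and this is precisely the part of the lemma that the proof of Theorem~\ref{thm:BLUIP} later relies on.
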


\begin{restatable}{theorem}{BLUIPKfive}
\label{thm:BLUIP}
    Logic\/ $\Kfive$ has the BLUIP and, hence, the ULIP.
\end{restatable}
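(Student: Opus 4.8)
The plan is to prove Theorem~\ref{thm:BLUIP} by establishing that the recursive function $A_\ell(t,\Sigma_c;\calG)$ from Table~\ref{table:Ap} and Steps~\ref{step_initial}--\ref{step_saturated_doubled} produces, for the initial call $A_\ell(0,\varnothing;\xi_\bullet)$, a multiformula satisfying the three BLUIP conditions of Def.~\ref{def:BLUIP}. Since Lemma~\ref{lem: BLUIP implies UIP} already reduces the ULIP to the BLUIP, and Lemma~\ref{lem:Ap_properties} already guarantees termination and the structural shape of the proof-search tree, the whole burden is to verify conditions BLUIP\eqref{BLUIP:1}--\eqref{BLUIP:3}. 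First I would dispose of the literal/label condition \eqref{BLUIP:1} and the soundness direction \eqref{BLUIP:2}: both should follow by induction on the recursion, checking each row of Table~\ref{table:Ap} and each case of Step~\ref{step_drule} locally. For \eqref{BLUIP:1}, every rule either copies literals from $\calG$ or introduces only $\top,\bot$ and boxed/diamonded material, and the base case $\LitDis_\ell$ explicitly excludes $\ell$; one checks no new labels escape $\Lab(\calG)$ because the fresh labels $\bullet j$, $j$, $\bullet\ruled$, $\ruled$ are consumed when the SCNF/SDNF is folded back into $\bullet$- and existing-label formulas in rows 5--6 and Step~\ref{step_saturated_doubled}. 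For \eqref{BLUIP:2}, the key observation is that each row mirrors the corresponding sound sequent rule of $\layKfive$ (together with the sound rules $\ruledtrunk$, $\doubledrule$), so satisfaction of the interpolant forces satisfaction of the sequent.

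The genuine difficulty is condition \eqref{BLUIP:3}: given $\calM,\calI\nvDash A_\ell(\calG)$, produce an $\ell$-bisimilar model $\calM'$ with interpretation $\calI'$ refuting $\calG$ itself. I would prove this by induction following the proof-search tree of $A_\ell(0,\varnothing;\xi_\bullet)$, using the structural classification of Lemma~\ref{lem:Ap_properties}. The propositional and modal rows (3--9) should be handled by straightforward inductive steps: for the saturating rows one reuses the same model and adjusts $\calI'$ on the fresh label, appealing to the back/forth clauses of Def.~\ref{def: Bisimilarity} to place a witness world. The box rows 5--6 are where the SCNF manipulation pays off: failure of $A_\ell(\calG)$ at $\calI$ means some conjunct fails, which pins down a specific fresh-label valuation under which the premise's interpolant fails, and the induction hypothesis supplies a refuting model whose fresh world can be attached (via \textbf{forth}) as a new crown component while preserving the $\ell$-bisimulation. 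The base case $\LitDis_\ell$ in Step~\ref{step_saturated_base} is handled by the literals$_\ell$ clause: failing all the non-$\ell$ literal disjuncts together with the $\ell$-bisimulation freedom to flip $\ell$ from true to false (exactly as illustrated in the worked example) yields a world refuting every literal of the saturated sequent, hence the whole sequent.

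The crux, and the step I expect to be the main obstacle, is reconciling the \emph{insufficient} leaves of case~\eqref{case:hard} in Lemma~\ref{lem:Ap_properties} with the correctness argument. There, proof search is stopped early (Step~\ref{step_saturated_insufficeintbase}) with $A_\ell(1,\Sigma_c;\calH)=\LitDis_\ell(\calH)$, which by itself is \emph{not} a BLU interpolant for $\calH$; the claim is only that the information assembled at the nearest preceding $\doubledrule$ or $\ruledtrunk$ application down the tree suffices. So the induction cannot be a naive bottom-up induction on single recursive calls; instead I would organize it around the $\doubledrule$/$\ruledtrunk$ nodes, proving that the multiformula \emph{at such a node} is a correct BLU interpolant for the sequent at that node, treating the entire saturated subtree above it (including insufficient leaves) as a single block. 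The mechanism that makes this work is the \emph{copying} construction of Def.~\ref{def:copying} together with Lemma~\ref{lem:copying}: because the crown of a $\Kfive$-sequent is a cluster, a countermodel for $\calH$ can be transformed by copying worlds away from the root into a countermodel for the lower sequent, and since copying yields a full bisimulation (Lemma~\ref{lem:copying}), Lemma~\ref{lem_modal_equivalence}\eqref{lem_modal_equivalence:full} transports refutation across it. The $\Box\Diamond\calG_c\subseteq\Sigma_c$ bookkeeping condition is precisely what certifies that no genuinely new diamond-information appears above the stopping point, so the copied witnesses already present in the lower countermodel cover every crown obligation. Making this modular bisimulation-transport argument precise---identifying exactly which worlds to copy, matching them to the fresh $\bullet\ruled,\ruled$ labels, and verifying that the resulting $\calM'$ is still a $\Kfive$-model with $(\calM',\calI')\leq_\ell(\calM,\calI)$---is the technical heart of the proof.
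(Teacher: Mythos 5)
Your overall architecture is the paper's: Lemma~\ref{lem: BLUIP implies UIP} reduces ULIP to BLUIP, and the theorem is proved by induction over the proof-search tree restricted to \emph{sufficient} calls (which is equivalent to your reorganization around the $\ruledtrunk$/$\doubledrule$ nodes, by Lemma~\ref{lem:Ap_properties}), with BLUIP\eqref{BLUIP:1}--\eqref{BLUIP:2} handled by local case analysis and the saturated base case handled by the freedom of \textbf{literals}$_\ell$ to flip $\ell$ from true to false. However, your description of the crux contains a step that would fail. You write that at an insufficient leaf ``a countermodel for $\calH$ can be transformed by copying worlds away from the root into a countermodel for the lower sequent,'' with Lemma~\ref{lem_modal_equivalence}\eqref{lem_modal_equivalence:full} transporting the refutation. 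But there is no countermodel for $\calH$ available to transport: insufficiency (Step~\ref{step_saturated_insufficeintbase}, case~\eqref{case:hard} of Lemma~\ref{lem:Ap_properties}) means precisely that the returned $\LitDis_\ell(\calH)$ need not be a BLU interpolant for $\calH$, so its failure under an interpretation yields nothing via the induction hypothesis. Indeed, bisimulation transport is never used in the paper to push a refutation of a premise sequent down to the conclusion --- in the sufficient subcase the IH directly gives $\calM',\calJ_0' \nvDash \calG, [\Theta_{v,u}]_{\bullet\ruled}, [[\Phi_{v,u}]]_{\ruled}$ and restricting $\calJ_0'$ to $\Lab(\calG)$ already refutes $\calG$; bisimulations enter only through the preprocessing $(\calN',\calI')\sim(\calM,\calI)$ and the final chain~\eqref{eq:bisimchain}.

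The missing ideas are the following devices from the paper's treatment of Step~\ref{step_saturated_doubled} for BLUIP\eqref{BLUIP:3}. First, one preprocesses $(\calM,\calI)$ by copying (Def.~\ref{def:copying}, Lemma~\ref{lem:copying}) into an \emph{injective} interpretation $\calI'$ into $\calN'$ whose leftover worlds $W'\setminus\mathit{Range}(\calI')$ are nonempty and partitioned into pairs $(v,u)$ with $\calI'(\bullet)R'v$ and not $\calI'(\bullet)R'u$. Second, since the premise interpolant is the conjunction $\bigspconj_{\mho_j\in\Xi}\mho_j$ of the multiformulas at the propositionally-saturated calls above the $\doubledrule$ node, each pair $(v,u)$ yields a failing conjunct $\mho_{v,u}$ as in~\eqref{eq:mainthingy}; if some failing conjunct is sufficient one applies IH and restricts labels, and only otherwise does the insufficient machinery engage. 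Third, in that remaining case the countermodel is \emph{built directly}: $\calM'$ is $\calN'$ with $\ell$ flipped to false on the set $T_\calG$ of~\eqref{valuation_calG} --- injectivity of $\calI'$ and disjointness of the pairs in $P$ are exactly what prevent conflicting flip demands --- and $\calM',\calJ'\nvDash\calG$ is then verified by the mutual structural induction~\eqref{calG_property}--\eqref{Phi_property}, where the modal cases at the placeholder labels $\bullet\ruled,\ruled$ use your (correct) observation that $\Box\Diamond\calG_c\subseteq\Sigma_c$ traces every modal formula of the leaf back to $\calG_c$ and hence to saturation of $\calG$, while the pairs in $P$ discharge the $\Diamond$-obligations at root-successors and non-successors alike. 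A smaller gloss in your plan: even BLUIP\eqref{BLUIP:2} at Step~\ref{step_saturated_doubled} is not purely local, since the witnesses for $\Diamond\delta_i$ and $\Diamond\delta_i'$ may collide with $\mathit{Range}(\calI)$; the paper copies $u$ away from the root and invokes Lemma~\ref{lem_modal_equivalence}\eqref{lem_modal_equivalence:full} there as well.
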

\begin{proof}
It is sufficient to prove  that, once the algorithm starts on $A_\ell(0,\varnothing; \phi)$, then every sufficient call $A_\ell(t,\Sigma_c;\calG)$ in the proof search returns a BLU interpolant for a $\Kfive$-sequent~$\calG$. Because the induction on the proof-search  is quite technical and involves multiple cases, we  demonstrate only a few representative cases, relegating most to the Appendix and omitting simple ones, e.g., BLUIP\eqref{BLUIP:1},  altogether.\looseness=-1

\begin{asparadesc}

\item[\textup{\textbf{BLUIP\eqref{BLUIP:2}}}] We show that $\calM, \calI \vDash A_\ell(t, \Sigma_c;\calG)$ implies $\calM, \calI \vDash \calG$ for any interpretation~$\calI$ of~$\calG$ into any $\Kfive$-model~$\calM=(W,R,V)$.  
    The hardest among  Steps~\mbox{\ref{step_initial}--\ref{step_modal}} is \textbf{Step~\ref{step_modal} using row~5} in Table~\ref{table:Ap}. Let $\calG=\calG', \Box \phi$ and $\calM, \calI \vDash A_\ell(t, \Sigma_c;\calG', \Box \phi)$ for
         \begin{equation}\label{Ap_box}
        A_\ell(t, \Sigma_c ;\quad \calG', \Box \phi) \quad=\quad \bigspconj_{i=1}^h \left(\bullet : \Box \delta_i \spdisj \bigspdisj_{\tau \in \calG} \tau : \gamma_{i,\tau} \right),
        \end{equation}
   i.e., for each $1\leq i \leq h$ either $\calM, \rho \vDash \Box \delta_i$ or $\calM,\calI(\tau) \vDash \gamma_{i,\tau}$ for some $\tau \in \calG$. 
   For an arbitrary $v$ such that $\rho R v$ and the the smallest $j$ such that~$\bullet j \notin \calG$, clearly $\calI_v = \calI \sqcup \{(\bullet j, v )\}$ is an interpretation of $\calG', \Box \varphi, [\varphi]_{\bullet j}$ into $\calM$. Since \mbox{$\calM, \calI_v(\bullet j)\vDash \delta_i$} whenever $\calM, \rho \vDash \Box \delta_i$, it follows that for each $1\leq i \leq h$ either $\calM, \calI_v(\bullet j) \vDash \delta_i$ or $\calM,\calI_v(\tau) \vDash \gamma_{i,\tau}$ for some $\tau \in \calG$, i.e., $\calM, \calI_v \vDash A_\ell\bigl(t, \Sigma_c ; \calG', \Box \varphi, [\varphi]_{\bullet j}\bigr)$ for
        \begin{equation}\label{Ap_box_premise}
        A_\ell\bigl(t, \Sigma_c ;\quad \calG', \Box \varphi, [\varphi]_{\bullet j}\bigr) \quad\equiv\quad \bigspconj_{i=1}^h \left(\bullet j : \delta_i  \spdisj \bigspdisj_{\tau \in \calG} \tau : \gamma_{i,\tau} \right).
        \end{equation}
    By IH, $\calM, \calI_v \vDash  \calG', \Box \varphi, [\varphi]_{\bullet j}$ whenever $\rho R v$. If $\calM, \rho \vDash \Box \phi$, then $\calM, \calI \vDash \calG$. Otherwise, $\calM, \calI_v(\bullet j) \nvDash \phi$ for some  $v$ with $\rho R v$. For it, $\calM, \calI_v \vDash \calG'$, hence, $\calM, \calI \vDash \calG$.
 
    The only other case we consider (here) is  \textbf{Step~\ref{step_saturated_doubled}}. 
    Let $\calM, \calI \vDash A_\ell(t, \Sigma_c; \calG)$ for $A_\ell(t, \Sigma_c; \calG)$ from~\eqref{Ap_saturated_doubled}, i.e., for some $1 \leq i \leq h$ we have $\calM, \rho \vDash \Diamond \delta_i$, and $\calM, \calI(\bullet 1)\vDash \Diamond \delta'_i$, and $\calM, \calI(\tau) \vDash \gamma_{i,\tau}$ for all $\tau \in \calG$. In particular, $\calM, v \vDash \delta_i$ for some $\rho Rv$  and $\calM, u \vDash \delta'_i$ for some $\calI(\bullet 1)Ru$. 
    Let $\calM'$ be obtained by copying $u$ into $u'$ away from the root in $\calM$ and let $\calJ = \calI \sqcup \{(\bullet \ruled,v), (\ruled, u')\}$ be a well-defined interpretation. 
    $\calM',\calJ \vDash A_\ell(1, \Box \Diamond \calG_c ; \calG, [\{\psi \mid \Diamond \psi \in \calG \}]_{\bullet \ruled}, [[ \{ \chi \mid \Diamond \chi \in \calGcrown \} ]]_{\ruled})$, as \eqref{Ap_saturated_SDNFdoubled} is true for $\calM'$ and $\calJ$. By IH, $\calM',\calJ \vDash  \calG, [\{\psi \mid \Diamond \psi \in \calG \}]_{\bullet \ruled}, [[ \{ \chi \mid \Diamond \chi \in \calGcrown \} ]]_{\ruled}$. 
    If $\calM', v \vDash\psi$ for some $\Diamond \psi \in \calG$ or $\calM', u' \vDash\chi$ for some $\Diamond \chi \in \calGcrown$, then $\calM', \calJ \vDash \calG$ because of $\Diamond \psi$ or $\Diamond \chi$ respectively. Otherwise, also $\calM',\calJ \vDash \calG$. Since we have $(\calM, \calI) \sim (\calM',\calJ)$ by Lemma~\ref{lem:copying},  we have $\calM, \calI \vDash \calG$ by Lemma~\ref{lem_modal_equivalence}\eqref{lem_modal_equivalence:full} in all cases.

\item[\textup{\textbf{BLUIP\eqref{BLUIP:3}}}] We show the following statement by induction restricted to  sufficient calls: if $\calM, \calI \nvDash A_\ell(t, \Sigma_c ; \calG)$, then $\calM', \calJ' \nvDash \calG$ for some interpretation~$\calJ'$  of $\calG$ into another $\Kfive$-model~$\calM'$ such that $(\calM',\calJ') \leqpos (\calM,\calI)$. Here we only consider \textbf{Step~\ref{step_drule}} as the other steps are sufficiently similar to~{\sf K}~and~{\sf S5} covered in~\cite{van2021,GieJalKuz21arx}.  Among the four subcases, Step~\ref{step_saturated_base} is tedious but conceptually transparent. Step~\ref{step_saturated_insufficeintbase} is trivial because the induction statement is only for sufficient calls while Step~\ref{step_saturated_insufficeintbase}  calls are insufficient   by Lemma~\ref{lem:Ap_properties}. Out of remaining two steps we only have space for \textbf{Step~\ref{step_saturated_doubled}}, which is conceptually the most interesting because its recursive call may be insufficient, precluding the use of IH for it. Let $\calM, \calI \nvDash A_\ell(t, \Sigma_c; \calG)$ for $A_\ell(t, \Sigma_c; \calG)$ from~\eqref{Ap_saturated_doubled}. 
 
    We first modify $\calM$ and $\calI$ to obtain an injective interpretation $\calI'$ into a $\Kfive$-model $\calN'=(W',R',V')$ such that $W' \setminus\mathit{Range}(\calI')$ is not empty and partitioned into pairs $(v,u)$ with       
            $\calI'(\bullet) R v$  and not $\calI'(\bullet) Ru$.
         To this end we employ copying as per Def.~\ref{def:copying}, constructing a sequence of interpretations $\calI_i$ from $\calG$ into models~$\calN_i=(W_i,R_i,V_i)$ starting from $\calN_0=\calM$ and $\calI_0 = \calI$ as follows:
    \begin{compactenum}
       \item \label{construction_model_injective_diamond}
        If $\calI_i(\tau_1)=\calI_i(\tau_2)$ for $\tau_1 \ne \tau_2$, obtain $\calN_{i+1}$ by copying $\calI_i(\tau_2)$ to a new world~$w$ and  redirect $\tau_2$ to this new world, i.e., $\calI_{i+1}= \calI_i \sqcup \{(\tau_2,w)\} \setminus \{(\tau_2,\calI_i(\tau_2))\}$.
        
     \item \label{step:construction_nonemptypairs}
        If $\calI_{K-1}$ is injective but $W_{K-1} \setminus \mathit{Range}(\calI_{K-1})=\varnothing$, obtain $\calN_{K}$ by copying $\calI_{K-1}(\bullet 1)$ to a new world $y$. Set $\calI_{K} =  \calI_{K-1}$. Now $W_{K} \setminus \mathit{Range}(\calI_{K})\ne\varnothing$.

    \item \label{step:construction_pairs}
        Finally, define the two sets  $Y = \{y \in W_{K} \setminus\mathit{Range}(\calI_{K}) 
        \mid \calI_{K}(\bullet) R_{K} y\}$ and $Z = \{z \in W_K \setminus\mathit{Range}(\calI_K) 
        \mid \text{not }\calI_K(\bullet) R_K z\}$ and
        obtain $\calN'$ by  copying:
        \begin{compactitem}
        \item for each $y \in Y$, copy $\calI_K(\bullet1)$ away from the root to a new world $y_2$;
        \item for each $z \in Z$, copy $\calI_K(\bullet1)$  to a new world $z_1$.
        \end{compactitem}
        Then $\calI' = \calI_K$ is an injective interpretation of $\calG$ into $\calN'$. 
\end{compactenum}
Note that  
        $ 
        W' \setminus\mathit{Range}(\calI')=Y \sqcup Z \sqcup \{y_2 \mid y \in Y\} \sqcup \{z_1 \mid z \in Z\}
        \ne \emptyset 
        $. Further,  $\calI'(\bullet) R' y$ for all $y \in Y$, and not $\calI'(\bullet) R' y_2$ for all $y \in Y$, and $\calI'(\bullet) R' z_1$ for all $z \in Z$, and not $\calI'(\bullet) R' z$ for all $z \in Z$. Thus, we obtain the requisite partition $P=\{(y,y_2)\mid y \in Y\}\sqcup\{(z_1,z)\mid z \in Z\}\ne \varnothing$  of the non-empty $W' \setminus\mathit{Range}(\calI')$.
        
        It is clear that $(\calN',\calI') \sim (\calM, \calI)$. So $\calN', \calI' \nvDash A_\ell(t, \Sigma_c ; \calG)$ by Lemma~\ref{lem_modal_equivalence}, i.e., for each $1 \leq i \leq h$ we have  $\calN', \rho \nvDash \Diamond \delta_i$ for $\rho = \calI'(\bullet)$, or $\calN', \calI'(\bullet 1)\nvDash \Diamond \delta'_i$, or 
        $\calN', \calI'(\tau) \nvDash \gamma_{i,\tau}$ for some $\tau \in \calG$. 
Thus, for any $(v,u)\in P$ and each  $1 \leq i \leq h$, we have  $\calN', v \nvDash \delta_i$, or $\calN', u\nvDash  \delta'_i$, or $\calN', \calI'(\tau) \nvDash \gamma_{i,\tau}$ for some $\tau \in \calG$. Hence,   \eqref{Ap_saturated_SDNFdoubled} is false under injective interpretation $\calJ_{v,u} = \calI' \sqcup \{(\bullet\ruled, v), (\ruled, u)\}$ into $\calN'$, i.e., abbreviating  $\Theta = \{\psi \mid \Diamond \psi \in \calG \}$ and $\Phi = \{ \chi \mid \Diamond \chi \in \calGcrown \}$, we get $\calN', \calJ_{v,u} \nvDash A_\ell(1, \Box \Diamond \calG_c; \calG, [\Theta]_{\bullet \ruled}, [[ \Phi ]]_{\ruled}).$

 Ordinarily, here we would use IH, but this is only possible for sufficient calls, which, alas,  is not guaranteed for~\eqref{Ap_saturated_SDNFdoubled}. What is known by Lemma~\ref{lem:Ap_properties}\eqref{thm_item:doubled} is that every branch going up from~\eqref{Ap_saturated_SDNFdoubled} leads to a call of the form 
  \begin{equation}\label{multiformulas}
    A_\ell(1, \Box \Diamond \calG_c ;\quad \calG, [\Theta_j]_{\bullet \ruled}, [[ \Phi_j ]]_{\ruled}),
   \end{equation}
    where~$\Theta_j \supseteq \Theta$ and~$\Phi_j \supseteq \Phi$, that returns multiformula $\mho_j$ and is either sufficient or insufficient but saturated. Let~$\Xi $ denote the multiset of these  multiformulas~$\mho_j$ returned by all these calls. Since Step~\ref{step_prop} is the only one used between that call and all the calls comprising~\eqref{multiformulas}, it is clear that  \eqref{Ap_saturated_SDNFdoubled} is  their conjunction, i.e.,
        $
        A_\ell(1, \Box \Diamond \calG_c; \calG, [\Theta]_{\bullet \ruled}, [[ \Phi ]]_{\ruled}) \equiv \bigspconj\nolimits_{\mho_j \in \Xi} \mho_j
    $.
    Collecting all this together, we conclude that for each pair $ (v,u) \in P$ there is some $\mho_{v,u} \in \Xi$ such that
    \begin{equation}
    \label{eq:mainthingy}
        \calN', \calJ_{v,u} \nvDash \mho_{v,u}.
    \end{equation}
    We distinguish between two cases.
    First,  suppose for at least one pair  $(v,u) \in P$ there is a sufficient $\mho_{v,u}= A_\ell(1, \Box \Diamond \calG_c ; \calG, [\Theta_{v,u}]_{\bullet \ruled}, [[ \Phi_{v,u} ]]_{\ruled})$ satisfying~\eqref{eq:mainthingy}. By~IH for this $\mho_{v,u}$ there is an interpretation~$\calJ_0'$  into a $\Kfive$-model~$\calM'$ such that $(\calM', \calJ_0') \leqpos (\calN', \calJ_{v,u})$ and~$\calM', \calJ_0' \nvDash \calG, [\Theta_{v,u}]_{\bullet \ruled}, [[ \Phi_{v,u} ]]_{\ruled}$. Thus, $\calM', \calJ' \nvDash \calG$
  for  $\calJ' = \calJ_0'\! \upharpoonright\! \Lab(\calG)$. Finally, by restricting to labels of $\calG$, we can see that  \begin{equation}
      \label{eq:bisimchain}
  (\calM',\calJ')\quad\leqpos\quad (\calN', \calI')\quad \sim\quad (\calM, \calI).
  \end{equation}

Otherwise,  \eqref{eq:mainthingy} does not hold for any pair $(v,u)\in P $ and any sufficient \mbox{$\mho_{v,u} \in \Xi$}. In this case,  $\calN', \calJ_{v,u} \nvDash \bigspconj\nolimits_{\mho_j \in \Xi} \mho_j$ guarantees the existence of an insufficient $\mho_{v,u}\in \Xi$ for each pair $(v,u) \in P$ such that \eqref{eq:mainthingy} holds.
Since all these $\mho_{v,u}$ are insufficient, we cannot use IH. Instead, we construct $\calM'$ and $\calJ'$ directly by changing $\ell$ from true to false if needed based on $\calG$ within $\mathit{Range}(\calI')$  and based on $\mho_{v,u}$'s outside of this range. Thanks to $\calI'$ being injective, we do not need to worry about conflicting requirements from different components of~$\calG$. Similarly, $P$ being a partition prevents conflicts outside $\mathit{Range}(\calI')$. Let $\calM' = (W',R',U')$ be~$\calN'$ with  $V'$ changed into  $U'$. 
    We define $\mfal{V'}{T}{\ell}$ as the valuation that makes $\ell$ false in all worlds from $T \subseteq W'$, i.e.,  $(\mfal{V'}{T}{\ell}) (q)= V'(q)$ for all $q\notin\{\ell,\overline{\ell}\}$, while 
    \[
    (\mfal{V'}{T}{\ell})(p) = 
    \begin{cases}
    V'(p) \setminus T & \text{if $\ell =p$},
    \\
    V'(p) \cup T & \text{if $\ell=\overline{p}$}
    \end{cases}
    \]
    for $p \in \{\ell,\overline{\ell}\}$.
    Using this notation, we define $U' = \mfal{V'}{T_\calG}{\ell}$ where
    \begin{multline}
    \label{valuation_calG}
        T_\calG =  \{ \calI'(\sigma) \mid \sigma : \ell \in \calG \} 
            \sqcup
            \{ v \mid (v,u) \in P\text{ and }\bullet\ruled: \ell \in \mho_{v,u}  \} 
            \sqcup \\
            \{u \mid (v,u) \in P \text{ and } \ruled:\ell \in \mho_{v,u}\}.
    \end{multline} 
    Finally, $\calJ' = \calI'$. It is clear that \eqref{eq:bisimchain}~holds for these $\calM'$ and $\calJ'$.

It remains to show that $\calM',\calJ' \nvDash \calG$. This is done by mutual induction on the construction of formula $\phi$ for the following three induction statements
\begin{align}
\label{calG_property}
    \sigma : \phi \in \calG  &\Longrightarrow \calM', \calI'(\sigma) \nvDash \phi,\\
\label{ThetA^+_property}
    \bullet\ruled :\phi \in \mho_{v,u} &\Longrightarrow \calM', v \nvDash \phi,\\
\label{Phi_property}
    \ruled: \phi \in \mho_{v,u} &\Longrightarrow  \calM', u \nvDash \phi.
\end{align}

\begin{asparadesc}
\item[Case $\phi = \ell'\in \Lit \setminus\{\ell, \overline{\ell}\}$.]   By Lemma~\ref{lem:Ap_properties}\eqref{thm_item:doubled}, all $\mho_{v,u}$ are computed by Step~\ref{step_saturated_insufficeintbase} due to their insufficiency, i.e., $\mho_{v,u}=\LitDis_\ell(\calG, [\Theta_{v,u}]_{\bullet \ruled}, [[ \Phi_{v,u} ]]_{\ruled})$. \eqref{ThetA^+_property} and \eqref{Phi_property} follow from \eqref{eq:mainthingy} and \eqref{Ap_saturated_base} because $\calM'$ agrees with $\calN'$ on $\ell' \notin\{\ell,\overline{\ell}\}$. Similarly, since  $\calJ_{v,u}$ agrees with  $\calJ'=\calI'$ on $\Lab(\calG)$, \eqref{calG_property} follows by using $\mho_{v,u}$ for any $(v,u) \in P \ne \varnothing$. 

\item[Case $\phi = \overline{\ell}$]       
        is analogous to the previous one. The only difference is the reason why $\calM'$ agrees with $\calN'$ on $\overline{\ell}$. Here,  $\sigma : \overline{\ell}\in\calG$ implies $\sigma : \ell\notin\calG$ because $\calG$ was processed by Step~\ref{step_saturated_doubled} not Step~\ref{step_initial}.
        Therefore, $\calI'(\sigma) \notin T_\calG$ by the injectivity of~$\calI'$, and  $\overline{\ell}$ was not made true in $\calI'(\sigma)$, ensuring \eqref{calG_property}. The argument for \eqref{ThetA^+_property} and \eqref{Phi_property} is similar, except $\bullet\ruled/\ruled : \overline{\ell}$ is taken from  $\mho_{v,u}$ processed by Step~\ref{step_saturated_insufficeintbase} not Step~\ref{step_initial}.
 
\item[Case $\phi = \ell$.] All of \eqref{calG_property}--\eqref{Phi_property}  follow from \eqref{valuation_calG}.

\item[Cases $\phi = \phi_1 \wedge \phi_2$ and $\phi = \phi_1 \vee \phi_2$] are standard and follow by IH due to saturation of $\calG$ for~\eqref{calG_property}  and $\mho_{v,u}$ for \eqref{ThetA^+_property} and \eqref{Phi_property}.

         \item[Case $\phi=\Box \xi$.]
       If~$\sigma : \Box \xi \in \calG$, then by saturation of $\calG$, there is a~$\tau$ such that $\tau : \xi \in \calG$ and $\calI'(\sigma)R'\calI'(\tau)$: if $\sigma = \bullet$, then $\tau = \bullet j$ for some~$j$, while if $\sigma \ne \bullet$, then $\tau \ne \bullet$. By IH\eqref{calG_property}, $\calM',\calI'(\tau) \nvDash \xi$, and  $\calM', \calI'(\sigma) \nvDash \Box \xi$. 
       
       If 
        $\bullet \ruled/\ruled : \Box \xi \in \mho_{v,u}$, then $\Box \xi \in \Box \Diamond \calG_c$ by conditions of Step~\ref{step_saturated_insufficeintbase} due to \eqref{multiformulas}, i.e., 
        $\Box \xi \in \calG_c$. By saturation of~$\calG$, there is a~$\tau \ne \bullet$ such that~$\tau : \xi \in \calG$. Since $v$, $u$, and $\calI'(\tau)$ are all in the cluster $C$ of $\calM'$, we have $vR'\calI'(\tau)$ and $uR'\calI'(\tau)$. It remains to use IH\eqref{ThetA^+_property} and IH\eqref{Phi_property}.

               \item[Case $\phi=\Diamond \xi$.]
         First consider $\sigma = \bullet$ and $\bullet : \Diamond \xi \in \calG$. Since $\calI'(\bullet)=\rho$ is the root,  $\rho R' w$ implies  either $w = \calI'(\bullet j)$ for some $j$ or $w \notin \mathit{Range}(\calI')$. In the former case, $\bullet j : \xi \in \calG$
         by saturation of~$\calG$,  so~$\calM', w \nvDash \xi$ by IH\eqref{calG_property}. In the latter case,
        $(w,u) \in P$ for some $u$. Recall for~$A_\ell(1, \Box \Diamond \calG_c ; \calG, [\Theta_{w,u}]_{\bullet \ruled}, [[ \Phi_{w,u} ]]_{\ruled})$ that we have $\Theta_{w,u} \supseteq \Theta = \{ \psi \mid \Diamond \psi \in \calG\} \ni \xi$. Hence, $\bullet\ruled: \xi \in \mho_{w,u}$ and $\calM', w \nvDash \xi$ by IH\eqref{ThetA^+_property}. Since $\calM', w \nvDash \xi$ for all $\calI'(\bullet)=\rho R' w$, we conclude $\calM', \calI'(\bullet) \nvDash \Diamond \xi$.

        If $\sigma \ne \bullet$ and $\sigma : \Diamond \xi \in \calG$, the argument is similar. But additionally we may have $w = \calI'(k)$ for some $k$ or $(v,w) \in P$ for some $v$. In the former case, $k : \xi \in \calG$
         by saturation of~$\calG$,  so~$\calM', w \nvDash \xi$ by IH\eqref{calG_property}. In the latter case,  $\Phi_{v,w} \supseteq \Phi = \{ \chi \mid \Diamond \chi \in \calG_c\} \ni \xi$. Hence, $\ruled: \xi \in \mho_{v,w}$ and $\calM', w \nvDash \xi$ by IH\eqref{Phi_property}. Since $\calM', w \nvDash \xi$ for all $\calI'(\sigma) R' w$, we conclude $\calM', \calI'(\sigma) \nvDash \Diamond \xi$.
        
        If $\bullet\ruled/\ruled : \Diamond \xi \in \mho_{v,u}$, then, similar to  the analogous subcase of $\Box \xi$, conditions of Step~\ref{step_saturated_insufficeintbase} imply that $\Diamond \xi \in \calG_c$, i.e, $\tau_0 : \Diamond \xi\in \calG$ for some $\tau_0 \ne \bullet$. Then $\tau : \xi \in \calG$ for all $\tau \ne \bullet$ by saturation of $\calG$. Thus, $\calM',\calI'(\tau)\nvDash \xi$ for all $\tau \ne \bullet$ by IH\eqref{calG_property}. For each $y\notin\mathit{Range}(\calI')$ such that $\rho R' y$, there is $x$ such that  $(y,x) \in P$ and $\bullet \ruled : \xi \in \mho_{y,x}$ because $\Theta_{y,x} \supseteq \Theta\ni \xi$. Hence, $\calM', y \nvDash \xi$ by IH\eqref{ThetA^+_property}. Finally, for each $x\notin\mathit{Range}(\calI')$ such that not  $\rho R' x$, there is $y$ such that $(y,x) \in P$ and $ \ruled : \xi \in \mho_{y,x}$ because $\Phi_{y,x} \supseteq \Phi\ni \xi$. Hence, $\calM', x \nvDash \xi$ by IH\eqref{Phi_property}.  We have shown that  $\calM', w \nvDash \xi$ whenever $v R' w$ ($u R' w$). Thus, $\calM', v \nvDash \Diamond \xi$ and $\calM', u \nvDash \Diamond \xi$. \qed
        \end{asparadesc}
\end{asparadesc}
\end{proof}

\section{Conclusion}
\label{sect:concl}
We presented layered sequent calculi for several extensions of modal logic $\Kfive$: namely, $\Kfive$ itself, $\KDfive$, $\Kfourfive$, $\KDfourfive$, $\KBfive$,~and~$\Sfive$. By leveraging the simplicity of Kripke models for these logics, we were able to formulate these calculi in a modular way and obtain optimal complexity upper bounds for proof search. We used the calculus for $\Kfive$ to 
obtain the first syntactic (and, hence, constructive) proof of the uniform Lyndon interpolation property for $\Kfive$.

Due to the proof being technically involved, space considerations prevented us from extending the syntactic proof of  ULIP to $\KDfive$, $\Kfourfive$, $\KDfourfive$, $\KBfive$,~and~$\Sfive$. For $\Sfive$, layered sequents coincide with hypersequents, and we plan to upgrade the hypersequent-based syntactic proof of UIP from~\cite{vdGiessen22PhD} to ULIP (see also~\cite{GieJalKuz21arx}). As for $\KDfive$, $\Kfourfive$, $\KDfourfive$, and $\KBfive$, the idea is to modify  the method presented here for $\Kfive$  by using the layered sequent calculus for the respective logic and making other necessary modifications, e.g., to rule~$\doubledrule$, to fit the specific structure of the layers. We conjecture that the proof for $\Kfourfive$, $\KDfourfive$, and $\KBfive$ would be similar to that for $\Sfive$, whereas $\KDfive$ would more closely resemble $\Kfive$.

\paragraph{Acknowledgments.} Iris van der Giessen and Raheleh Jalali are grateful for the productive and exciting four-week research visit to the Embedded Computing Systems Group  at TU Wien. The authors thank the anonymous reviewers for their useful comments.

\bibliographystyle{splncs04}
\providecommand{\noopsort}[1]{}\providecommand{\giessen}[1]{}

\newpage
\appendix
\section{Appendix}

In this Appendix we  provide and elaborate on the proofs that could not be included into the main part because of space considerations.

\soundcomplete*
\begin{proof}
We prove a cycle of implications: from the left statement to the middle one, from the middle one to the right one, and, finally, from the right to the left one.
Most of the proof can be found in the main text. Here we elaborate only on the left-to-middle implication and the argument why the constructed countermodel is an $\logic$-model due to $\calG'$ being saturated in the right-to-left implication.

For the left-to-middle implication, i.e.,    $\vdash_\layL \calG \quad\Longrightarrow\quad \vDash_\logic \iota(\calG)$, which is proved by induction on an $\layL$-derivation of $\calG$, we  show the case of  $\Diamond_c$, a rule of all considered calculi. Of the many flavors of~$\Diamond_c$, depending on what brackets~$\llbracket \enspace \rrbracket$~and~$ \llparenthesis \enspace \rrparenthesis$ represent, we show the following application for $\Kfive$ and $\KDfive$:
    \begin{center}
        $\vlinf{\Diamond_c}{}{\mathcal{G}, [[ \Sigma, \Diamond \varphi ]], [ \Pi ]}{\mathcal{G}, [[ \Sigma, \Diamond \varphi ]], [ \Pi, \phi ]}$
    \end{center}
    Let $\calM$ be an $\logic$-model with root $\rho$ with $\calM, \rho \vDash \iota(\calG) \vee \Box \Box (\bigvee \Sigma \vee \Diamond \phi) \vee \Box (\bigvee \Pi \vee \phi)$. If $\calM, \rho \vDash \iota(\calG)$, $\calM, \rho \vDash \Box \Box(\bigvee \Sigma \vee \Diamond \phi)$, or $\calM, \rho \vDash \Box (\bigvee \Pi)$, we are done. If not, there is a $v$ such that $\rho R v$ and $\calM, v \vDash \phi$. By the form of the models, indeed, $\calM, \rho \vDash \Box \Box \Diamond \phi$, hence $\calM, \rho \vDash \Box \Box (\bigvee \Sigma \vee \Diamond \phi)$. Therefore, we have \mbox{$\vDash_\logic \iota(\calG) \vee \Box \Box (\bigvee \Sigma \vee \Diamond \phi) \vee \Box(\bigvee \Pi)$}.

   To prove that the countermodel $\calM$ constructed for the contraposition of the right-to-left implication is an $\logic$-model, we use the saturation of $\calG'$. For example, in case of $\KDfive$, the root has a successor because of $\ruled_t$-saturation. By induction on~$\phi$, for each $\sigma : \phi \in \calG'$ we have $\calM, \sigma \nvDash \phi$, using the saturation of $\calG'$. Taking $\calI$ of $\calG$ into $\calM$ as the identity function (or $\calI(\bullet) = 1$ in case of $\KBfive$), we have $\calM, \calI \nvDash \calG$ as desired. \qed
\end{proof}

\modaleq*
\begin{proof}
    From $\calM, \calI \vDash \calG$, it follows that $\calM, \calI(\sigma) \vDash \phi$ for some $\sigma : \phi \in \calG$. Let us examine 1, where $Z$ is the bisimulation underlying $\leq_\ell$.  As $\calI(\sigma) Z  \calI'(\sigma)$, it is sufficient to show by induction on $\phi$, that $\calM, w \vDash \phi$ implies $\calM', w' \vDash \phi$ for every $wZw'$. We treat some cases. If $\phi = \overline{\ell}$, i.e., $\calM, w \nvDash \ell$, it follows from part~b) of \textbf{\textup{literals}}$_{\ell}$ (Def.~\ref{def: Bisimilarity}) that $\calM',w' \nvDash \ell$, hence $\calM', w' \vDash \overline{\ell}$. If $\phi = \Diamond \psi$, there is a $v$ such that $wRv$ and $\calM,v \vDash \psi$. By \textbf{\textup{forth}}, there exists~$v'$ such that $w'Rv'$ and $vZv'$. So by IH, $\calM', v' \vDash \psi$ and therefore $\calM', w' \vDash \Diamond \psi$. For $\phi = \Box \psi$, let $w'Rv'$. By  \textbf{\textup{back}}, there exists $v$ such that $vZv'$ and $wRv$. So, $\calM, v \vDash \psi$ and by IH we have $\calM',v' \vDash \psi$ as desired.\looseness=-1\qed
\end{proof}

\Aptechnical*

\begin{proof}  For~\eqref{thm_item:terminating}, note that Steps~\ref{step_initial}--\ref{step_modal} are terminating by termination of the rules modulo saturation by Theorem~\ref{thm:termination_proof_search}. In addition, the algorithm stops in Step~\ref{step_saturated_base} or~\ref{step_saturated_insufficeintbase} by definition. When $A_\ell(t, \Sigma_c ; \calG)$ is calculated by Step~\ref{step_saturated_dtrunk}, $\calG$ must consist only of the trunk, and after that the algorithm continues on layered sequents that contain the crown, meaning we will never encounter Step~\ref{step_saturated_dtrunk} again. 
    
    In particular, to prove property~\eqref{thm_item:dtrunk}, note that each branch going up from Step~\ref{step_saturated_dtrunk} will either stop in Step~\ref{step_initial}, or will be calculated in Step~\ref{step_drule}. In the latter case, note that Steps~\ref{step_saturated_base}--\ref{step_saturated_insufficeintbase} do not apply resulting in Step~\ref{step_saturated_doubled}. In particular,~$t$ is set to~$0$ in Step~\ref{step_saturated_dtrunk} and will not change in any Steps~\ref{step_prop}--\ref{step_modal} which excludes Step~\ref{step_saturated_insufficeintbase}.
        
    Finally, continuing proof of~\eqref{thm_item:terminating}, there is a bound on the number of applications of Step~\ref{step_saturated_doubled}. Indeed, for each branch of the algorithm the annotations $\Sigma_c \subseteq \calG_c$ of the multiformulas $A_\ell(t, \Sigma_c ; \calG)$ form a non-decreasing sequence of sets. These sets grow in Step~\ref{step_saturated_doubled}, because there $\Box \Diamond \calG_c \not \subseteq \Sigma_c$ (otherwise Step~\ref{step_saturated_insufficeintbase} would apply). Also, all formulas in these annotations are subformulas of the layered sequent at the root. So, if a branch of the algorithm did not stop earlier by Step~\ref{step_initial}, one would reach, after finitely many applications of Step~\ref{step_saturated_doubled}, a multiformula $A_\ell(t, \Sigma_c ; \calG)$ for which $\Box \Diamond \calG_c  \subseteq \Sigma_c$ that stops the algorithm in Step~\ref{step_saturated_insufficeintbase}.
    
Property~\ref{thm_item:doubled} says something more refined than we just observed. The form of $A_\ell(1,\Box \Diamond \calG_c;\calG, [\Theta], [[\Phi]])$ is justified by the rules in Steps~\ref{step_saturated_doubled}, \ref{step_initial}, and~\ref{step_prop}. If it is calculated by Step~\ref{step_initial}, then this leaf is sufficient by definition. Otherwise, applications of Steps~\ref{step_initial} and~\ref{step_prop} propositionally saturate the layered sequent. There are two cases. Either, $\Box \Diamond (\calG, [\Theta], [[\Phi]])_c \not \subseteq \Box \Diamond \calG_c$. Then any saturation of it will contain new modal formulas in the crown and will trigger Step~\ref{step_saturated_doubled}, if not stopped earlier in Step~\ref{step_initial}. Hence, in this case, $A_\ell(1,\Box \Diamond \calG_c;\calG, [\Theta], [[\Phi]])$ will be sufficient. Or, $\Box \Diamond (\calG, [\Theta], [[\Phi]])_c \subseteq \Box \Diamond \calG_c$. We will now show that it will not be saturated only propositionally, but in fact fully saturated. Each $\sigma : \Box \phi \in \calG$ is saturated because $\calG$~is. If $\Box \phi \in \Theta$ or $\Box \phi \in \Phi$, then also $\Box \phi \in \calG_c$ by assumption and so in both cases these occurrences are saturated by the saturation of $\calG$. For  $\sigma : \Diamond \phi \in \calG$, note that it is saturated for each corresponding component in $\calG$ by saturation and $\Theta \supseteq \{ \psi \mid \Diamond \psi \in \calG \}$ and $\Phi \supseteq \{\chi \mid \Diamond \chi \in \calG_c \}$, making it also saturated w.r.t.~components $\Theta$ and $\Phi$ (the latter is only needed if $\Diamond \phi \in \calG_c$). For $\Diamond \phi \in \Theta$ and $\Diamond \phi \in \Phi$ the same argument applies as $\Diamond \phi \in \calG_c$ by assumption. Now, as $A_\ell(1,\Box \Diamond \calG_c;\calG, [\Theta], [[\Phi]])$ is saturated and $\Box \Diamond (\calG, [\Theta], [[\Phi]])_c \subseteq \Box \Diamond \calG_c$, we apply Step~\ref{step_saturated_insufficeintbase} to it which makes it insufficient. \qed
\end{proof}

\BLUIPKfive*

\begin{proof}
In addition to the cases provided in the proof in the main text, here we show additional representative cases:

\subsubsection*{BLUIP\eqref{BLUIP:2}, some of the cases from Steps~\ref{step_initial}--\ref{step_modal}.}
\begin{compactdesc}
\item[row 2 of Table \ref{table:Ap}:]

        $\calG=\calG'$$\vlfill{q,\overline{q}}_\sigma$. Then either
        $\calM, \calI(\sigma) \vDash q$ or~$\calM, \calI(\sigma) \vDash \overline{q}$, hence~$\calM, \calI \vDash \calG$.
       
    \item[row 4 of Table \ref{table:Ap}:]
    
If $\calG=\calG'\vlfill{\varphi \land \psi}_\sigma$, then
    \begin{align}\label{Ap_conj}
        A_\ell(t, \Sigma_c; \calG) = {A_\ell\bigl(t, \Sigma_c ; \cal  G'\vlfill{\varphi \land \psi, \varphi}_\sigma \bigr)}  \spconj {A_\ell\bigl(t, \Sigma_c ; \calG'\vlfill{\varphi \land \psi, \psi}_\sigma \bigr)}.
        \end{align}
        Since  $\calM, \calI \vDash {A_\ell\bigl(t, \Sigma_c ; \calG'\vlfill{\varphi \land \psi, \varphi}\bigr)}$ and $\calM, \calI \vDash {A_\ell\bigl(t, \Sigma_c ; \calG'\vlfill{\varphi \land \psi, \psi}\bigr)}$, by~IH, $\calM, \calI \vDash \cal  G'\vlfill{\varphi \land \psi, \varphi}$ and~$\calM, \calI \vDash \calG'\vlfill{\varphi \land \psi, \psi}$. If $\calM, \calI(\sigma) \vDash \phi \land \psi$ we are done. If not,~$\calM, \calI(\sigma) \nvDash \phi$ or~$\calM, \calI(\sigma) \nvDash \psi$. Therefore,~$\calM, \calI \vDash \calG$.
 \item[row 9 of Table \ref{table:Ap}:]
 
        Suppose~$\calG=\calG', \llbracket \Sigma, \Diamond \phi \rrbracket_\sigma, \llparenthesis \Pi \rrparenthesis_\tau$. We have
        \[
        A_\ell(t, \Sigma_c ; \calG) = A_\ell\bigl(t, \Sigma_c ; \calG', \llbracket \Sigma, \Diamond \phi \rrbracket_\sigma, \llparenthesis \Pi , \phi \rrparenthesis_\tau \bigr)
        \]
        By IH,~$\calM, \calI \vDash \calG', \llbracket \Sigma, \Diamond \phi \rrbracket_\sigma, \llparenthesis \Pi , \phi \rrparenthesis_\tau$. Note that~$\calI(\sigma)R\calI(\tau)$ meaning that $\calM, \calI(\tau) \vDash \phi$ implies~$\calM, \calI(\sigma) \vDash \Diamond \phi$. Hence~$\calM, \calI \vDash \calG$.
\end{compactdesc}

\subsubsection*{BLUIP\eqref{BLUIP:2}, remaining cases from Step \ref{step_drule}.}
\begin{compactdesc}
\item[Steps~\ref{step_saturated_base} and~\ref{step_saturated_insufficeintbase}.]
        We can treat them  simultaneously because both are calculated according to~\eqref{Ap_saturated_base}. This case is easy as the truth of the interpolant implies that some literal in~$\calG$ is true.

        \item[Step~\ref{step_saturated_dtrunk}.]       Suppose~$\calM, \calI \vDash A_\ell(0, \Sigma_c; \Gamma)$  for the multiformula presented in~\eqref{Ap_saturated_dtrunk}
        \begin{equation}\label{BLUIP:ii_saturated_dtrunk_left}
        A_\ell(0, \Sigma_c; \Gamma) =  
        \Bigl(\bullet : \Box \bot \spdisj  
            \bigspdisj_{i=1}^{h} \big( \bullet : \Diamond \delta_i  \spconj  \bullet : \gamma_{i} \big) 
            \Bigr) 
            \spconj 
            \bigl(\bullet : \Diamond \top \spdisj \LitDis_\ell(\Gamma)  \bigr).
        \end{equation}
         Obviously, either~$\calM, \rho \nvDash \Box \bot$ or~$\calM, \rho \nvDash \Diamond \top$. In the latter case, it follows from the right conjunct of~\eqref{BLUIP:ii_saturated_dtrunk_left} that some literal in~$\Gamma$ is true, resulting in~$\calM, \calI \vDash \Gamma$. In the former case,
         based on the left conjunct of~\eqref{BLUIP:ii_saturated_dtrunk_left}, there is an $1 \leq i \leq h$ such that $\calM, \rho \vDash \Diamond \delta_i \wedge \gamma_i$. In particular, $\calM, v \vDash \delta_i$ for some  $v$ such that~$\rho R v$.  It follows that \[
         \calM, \calJ \vDash \bigspdisj_{i=1}^{h} \Bigl( \bullet 1
            : \delta_i \spconj \bullet : \gamma_{i} \Bigr)
            \] for $\calJ = \calI \sqcup \{(\bullet 1, v)\}$, i.e., $\calM, \calJ \vDash A_\ell(0, \Sigma_c ; \Gamma, [\{\psi \mid \Diamond \psi \in \Gamma \}]_{\bullet 1})$. By IH, $\calM, \calJ \vDash\Gamma, [\{\psi \mid \Diamond \psi \in \Gamma \}]_{\bullet 1}$. If $\calM, v \vDash \psi$ for some $\Diamond \psi \in \Gamma$, then $\calM, \rho \vDash \Diamond \psi$, and $\calM, \calI \vDash \Gamma$. Otherwise, $\calM, \calJ \vDash \Gamma$, so again $\calM, \calI\vDash \Gamma$.
\end{compactdesc}

\subsubsection*{BLUIP\eqref{BLUIP:3}, some of the cases from Steps~\ref{step_initial}--\ref{step_modal}.}
Let us start with the induction steps in which~$\calG$ is not saturated and 
$A_\ell(t, \Sigma_c ; \calG)$ is calculated according to a row in Table~\ref{table:Ap} (one of Steps~\ref{step_initial}--\ref{step_modal} in the algorithm):
\begin{compactdesc}
 \item[row 2 of Table \ref{table:Ap}:]
   $\calG$ cannot be of the form~$\calG'\vlfill{q,\overline{q}}_\sigma$ since $A_\ell(t, \Sigma_c ; \calG) = \sigma : \top$  and we assumed~$\calM, \calI \nvDash A_\ell(t, \Sigma_c ; \calG)$.

    \item[row 4 of Table \ref{table:Ap}:]
    
    If~$\calG=\calG'\vlfill{\varphi \land \psi}_\sigma$, then~$A_\ell(t, \Sigma_c; \calG)$ is defined in~\eqref{Ap_conj}. So either~$\calM, \calI \nvDash A_\ell\bigl(t, \Sigma_c ; \cal  G'\vlfill{\varphi \land \psi, \varphi}_\sigma \bigr)$ or~$\calM, \calI \nvDash  A_\ell\bigl(t, \Sigma_c ; \calG'\vlfill{\varphi \land \psi, \psi}_\sigma \bigr)$. The cases are symmetric, so let us only treat the former. By assumption, $\calG$~is a sufficient layered sequent and thus so is~$\calG'\vlfill{\varphi \land \psi, \varphi}_\sigma$ by definition. So we can apply the IH and get a model~$\calM'$ and interpretation~$\calI'$, such that~$(\calM', \calI') \leqpos (\calM, \calI)$ and~$\calM', \calI' \nvDash \calG'\vlfill{\varphi \land \psi, \varphi}_\sigma$. So, in particular,~$\calM', \calI' \nvDash \calG$.
    
    \item[row 5 of Table~\ref{table:Ap}:] If~$\calG=\calG', \Box \phi$, then~$A_\ell(t, \Sigma_c ; \calG)$ is defined in~\eqref{Ap_box} using the SCNF of~$A_\ell\bigl(t, \Sigma_c ; \calG', \Box \varphi, [\varphi]_{\bullet j}\bigr)$ as defined in~\eqref{Ap_box_premise}. From~$\calM, \calI \nvDash A_\ell(t, \Sigma_c ; \calG)$ it follows that for some~$1 \leq i \leq h$, both~$\calM, \calI(\bullet) \nvDash \Box \delta_i$ and also for all $\tau \in \calG$,~$\calM, \calI(\tau) \nvDash \gamma_{i,\tau}$. From the former it follows that there is~$v$ such that~$\calI(\bullet) R v$ and~$\calM, v \nvDash \delta_i$. Define a new interpretation $\calJ = \calI \sqcup \{(\bullet j, v) \}$. By inspection of~\eqref{Ap_box_premise} we can easily see that $\calM, \calJ \nvDash A_\ell\bigl(t, \Sigma_c ; \calG', \Box \varphi, [\varphi]_{\bullet j}\bigr)$. Note that~$\calG', \Box \varphi, [\varphi]_{\bullet j}$ is a sufficient layered sequent, and by IH  there are~$\calM'$ and~$\calJ'$ such that~$(\calM', \calJ') \leqpos (\calM, \calJ)$ and~$\calM', \calJ' \nvDash \calG', \Box \varphi, [\varphi]_{\bullet j}$. Now define $\calI' = \calJ' \upharpoonright Dom(\calI)$. Clearly,~$(\calM',\calI') \leqpos (\calM,\calI)$ and~$\calM, \calI \nvDash \calG$.

    \item[row 9 of Table \ref{table:Ap}:]

    If~$\calG=\calG', \llbracket \Sigma, \Diamond \phi \rrbracket_\sigma, \llparenthesis \Pi \rrparenthesis_\tau$, then~$A_\ell(t, \Sigma_c ; \calG)$ is obtained from a sufficient layered sequent as~$A_\ell\bigl(t, \Sigma_c ; \calG', \llbracket \Sigma, \Diamond \phi \rrbracket_\sigma, \llparenthesis \Pi , \phi \rrparenthesis_\tau \bigr)$. By IH, there is~$(\calM', \calI') \leqpos (\calM, \calI)$ such that $\calM',\calI' \nvDash \calG', \llbracket \Sigma, \Diamond \phi \rrbracket_\sigma, \llparenthesis \Pi , \phi \rrparenthesis_\tau$. In particular,~$\calM',\calI' \nvDash \calG$. 
\end{compactdesc}

\subsubsection*{BLUIP\eqref{BLUIP:3},  Step~\ref{step_saturated_base}.}
     Here $\calG$ has no diamond formulas and~$A_\ell(t, \Sigma_c;\calG)$ is calculated according to~\eqref{Ap_saturated_base}. Since~$\calM,\calI \nvDash A_\ell(t, \Sigma_c;\calG)$, we have by inspection of~\eqref{Ap_saturated_base} that for all $\sigma : \ell' \in \calG $~where $\ell' \in \Lit \setminus \{ \ell \}$,
        \begin{align}\label{Ap_base_literal}
        \calM, \calI(\sigma) \nvDash  \ell'.
        \end{align}
    We construct an $\ell$-bisimilar model~$\calM'$ with  interpretation~$\calI'$ such that  each occurrence of~$\ell$ in~$\calG$ is still falsified in the corresponding world in~$\calM'$ according to~$\calI'$. We proceed in two steps. Informally speaking, we first copy worlds to make an injective interpretation~$\calI'$ and after that we modify the valuation of~$p\in\{\ell,\overline{\ell}\}$, as desired.
    \begin{compactenum}
        \item\label{construction_model_injective}
        Divide the domain of~$\calI$ into equivalence classes such that~$\sigma$ and~$\tau$ belong to the same class if and only if~$\calI(\sigma) = \calI(\tau)$. One could say that each class is represented by a world in~$\calM$. For each equivalence class with~$n > 1$ elements, make~$n-1$ copies of the corresponding world and call this model $\calN = (W', R', V_\calN)$. Define~$\calI'$ to be the  interpretation assigning different labels in an equivalence class to copies of the worlds in such a way that~$\calI'$ is injective and, by Lemma~\ref{lem:copying},~$(\calN, \calI') \sim (\calM, \calI)$.

        \item\label{construction_model_valuation}
    Define~$\calM' = (W', R', V')$ to be the same as~$\calN$ except for valuations of~$p$:
        \[
         V'(p) = \begin{cases}
            V_\calN(p) \setminus \{ w \mid \sigma : p \in \calG, w = \calI(\sigma)\} & \text{ if } \ell = p.\\
            V_\calN(p) \cup \{ w \mid \sigma : \overline{p} \in \calG, w = \calI(\sigma)\} & \text{ if } \ell = \overline{p}.
        \end{cases}
        \]
        Indeed, by Definition \ref{def: Bisimilarity},~$(\calM',\calI') \leqpos (\calM, \calI)$ as desired. 
    \end{compactenum}
    Now we check that the constructed model~$\calM'$ with interpretation~$\calI'$ falsifies~$\calG$. We prove that~$\calM', \calI'(\sigma) \nvDash \phi$ whenever~$\sigma : \phi \in \calG$ by induction on the structure of~$\phi$. Recall that~$\calG$ does not contain formulas of the form~$\Diamond \psi$.
    \begin{compactitem}
        \item 
        If~$\sigma : \ell \in \calG$, then~$\calI(\sigma) \notin V'(p)$ in case $\ell = p$ and $\calI(\sigma) \in V'(p)$ in case $\ell = \overline{p}$. In both cases, $\calM', \calI'(\sigma) \nvDash \ell$. 

        \item        
If~$\sigma : \overline{\ell} \in \calG$, then~$\calM, \calI(\sigma) \nvDash  \overline{\ell}$ by~\eqref{Ap_base_literal}. If $\ell = p$, then~$\calI(\sigma) \in V'(p)$. By construction, also~$\calI'(\sigma) \in V_\calN(p)$. Note that~$\sigma : p \notin \calG$, otherwise Step~\ref{step_initial} would have been applied. Therefore~$\calI'(\sigma) \in V'(p)$ and thus~$\calM', \calI' \nvDash \overline{p}$. We leave the case that $\ell = \overline{p}$ to the reader. 

        \item
        If~$\sigma : \ell' \in \calG$ with~$\ell' \in \Lit \setminus \{\ell, \overline{\ell} \}$, then~$\calM, \calI(\sigma) \nvDash  \ell'$ by~\eqref{Ap_base_literal}, which immediately transfers to~$\calM'$ and~$\calI'$ by construction.

        \item 
        The cases for~$\land$ and~$\lor$ easily follow by saturation of~$\calG$. 

        \item 
        If~$\sigma : \Box \psi \in \calG$, we use the fact that~$\calG$ is saturated to see that~$\tau : \psi \in \calG$ for some~$\tau$ such that~$\calI'(\sigma)R\calI'(\tau)$. By induction hypothesis,~$\calM',\calI'(\tau) \nvDash \psi$ and therefore~$\calM',\calI'(\sigma) \nvDash \Box \psi$. 
    \end{compactitem}

\subsubsection*{BLUIP\eqref{BLUIP:3}, Step~\ref{step_saturated_dtrunk}.}
    Now suppose~$A_\ell(0, \Sigma_c;\Gamma)$ is obtained by Step~\ref{step_saturated_dtrunk}, i.e.,~the sequent consists of only the trunk and contains at least one diamond formula. Multiformula~$A_\ell(0, \Sigma_c;\Gamma)$ is calculated as in~\eqref{Ap_saturated_dtrunk} making use of the SDNF of $A_\ell(0, \Sigma_c ; \Gamma, [\{\psi \mid \Diamond \psi \in \Gamma \}]_{\bullet 1})$ presented in~\eqref{Ap_saturated_SDNFdtrunk}. By~$\calM, \calI \nvDash A_\ell(0,\Sigma_c;\Gamma)$, we know from~\eqref{Ap_saturated_dtrunk} that
    \begin{center} 
\adjustbox{max width = \textwidth}{    
$        \calM, \calI \nvDash \bigspdisj_{i=1}^{h} \big( \bullet : \Diamond \delta_i  \spconj \bullet : \gamma_{i} \big) \spdisj \bullet : \Box \bot$
  \quad or\quad
        $\calM, \calI \nvDash \bigspdisj_{\bullet : \ell' \in \Gamma, \ell' \in \Lit \setminus \{ \ell \}} \bullet : \ell' \spdisj \bullet : \Diamond \top$
    }
    \end{center}
    In the latter case,~$\calM$ is a one-world (irreflexive) model. We only need to define a new valuation on~$\calM$ to define~$\calM'$ as in Step~\ref{construction_model_valuation} and the proof proceeds the same way and is left to the reader. In particular, note that~$\Gamma$ has no boxed formulas as it only consists of the trunk.

    For the former case,~$\calM$ consists of more than one world. In particular, for each~$1 \leq i \leq h$, either~$\calM, \calI(\bullet) \nvDash \gamma_i$ or for all worlds~$v$ such that~$\calI(\bullet)Rv$ we have~$\calM, v \nvDash \delta_i$. For all such~$v$, consider the  interpretation~$\calJ_v : \{\bullet, \bullet 1 \} \rightarrow \calM$ defined by mapping label~$\bullet 1$ to world~$v$. By inspection of~\eqref{Ap_saturated_SDNFdtrunk}, observe that for all such~$v$ we have~$\calM, \calJ_v \nvDash A_\ell(0, \Sigma_c ; \Gamma, [\{\psi \mid \Diamond \psi \in \Gamma \}]_{\bullet 1})$. It will be sufficient to fix such a~$v$. Since~$A_\ell(0, \Sigma_c ; \Gamma, [\{\psi \mid \Diamond \psi \in \Gamma \}]_{\bullet 1})$ is sufficient by Lemma~\ref{lem:Ap_properties}\eqref{thm_item:dtrunk}, we can apply the induction hypothesis to obtain $(\calM', \calJ') \leqpos (\calM, \calJ_v)$ such that $\calM', \calJ' \nvDash \Gamma, [\{\psi \mid \Diamond \psi \in \Gamma \}]_{\bullet 1}$. Now let~$\calI'$ be~$\calJ'$ restricted to the domain~$\{ \bullet\}$. Hence,~$\calM', \calI' \nvDash \calG$. 
\qed
\end{proof}
\end{document}